\newtheorem{thm}{Theorem}
\newtheorem{prop}[thm]{Proposition}
\newtheorem{lem}[thm]{Lemma}
\newtheorem{cor}[thm]{Corollary}
\newtheorem{rem}[thm]{Remark}
\newtheorem{definition}[thm]{Definition}
\DeclareMathOperator*{\argmin}{argmin}
\newcommand{\ldef}{:=}
\newcommand{\rdef}{=:}
\newcommand{\Mc}[1]{\mathcal{#1}}
\newcommand{\real}{\ensuremath{\mathbb{R}}}
\newcommand{\complex}{\ensuremath{\mathbb{C}}}
\newcommand{\realp}{\ensuremath{\mathbb{R}_{>0}}}
\newcommand{\realz}{\ensuremath{\mathbb{R}_{\ge 0}}}
\newcommand{\nat}{{\mathbb{N}}}
\newcommand{\natz}{{\mathbb{N}}_0}
\newcommand{\e}{\mathrm{e}}
\newcommand{\dd}{\mathrm{d}}
\newcommand{\cl}[1]{\mathrm{cl}(#1)}
\newcommand{\norm}[1]{\left\lVert #1 \right\rVert}
\newcommand{\trsp}{^\mathrm{T}}
\newcommand{\tmin}{\tau_{\min}}
\newcommand{\tmax}{\tau_{\max}}
\newcommand{\invspace}{\Mc{E}}
\newcommand{\spec}{\sigma}
\newcommand{\rspantxt}{\real\mathrm{-span}}
\newcommand{\rspan}{\Mc{H}}
\newcommand{\reig}{\real\mathrm{-eigen \ subspace}}
\newcommand{\thmtitle}[1]{\mbox{}\textit{(#1).}}
\newcommand{\remend}{\relax\ifmmode\else\unskip\hfill\fi\hbox{$\bullet$}}
\begin{document}
	
\title{Analysis of Inter-Event Times in Linear Systems under
  Region-Based Self-Triggered Control} \author{Anusree Rajan, \IEEEmembership{Member, IEEE}, and
  Pavankumar Tallapragada, \IEEEmembership{Member, IEEE}%
  \thanks{This work was partially supported by Science and Engineering
    Research Board under grant CRG/2019/005743. A. Rajan was supported
    by a fellowship grant from the Centre for Networked Intelligence
    (a Cisco CSR initiative) of the Indian Institute of Science.\\
    Anusree Rajan is with the Department of Electrical Engineering,
    Indian Institute of Science and Pavankumar Tallapragada is with
    the Department of Electrical Engineering and Robert Bosch Centre
    for Cyber Physical Systems, Indian Institute of Science {\tt\small
      \{anusreerajan,pavant\}@iisc.ac.in }} }

\maketitle

\begin{abstract}
  This paper analyzes the evolution of inter-event times (IETs) in linear
  systems under region-based self-triggered control (RBSTC). In this control
  method, the state space is partitioned into a finite number of conic
  regions and each region is associated with a fixed IET. In this framework, studying the steady state behavior of the
  IETs is equivalent to studying the existence of a conic
  subregion that is positively invariant under the map that gives the
  evolution of the state from one event to the next. We provide
    necessary conditions and sufficient conditions for the existence of a
    positively invariant subregion (PIS). We also provide necessary and
    sufficient conditions for a PIS to be
    asymptotically stable. Indirectly, they provide necessary and
    sufficient conditions for local convergence of IETs
    to a constant or to a given periodic sequence.  We illustrate the
  proposed method of analysis and results through numerical
  simulations.
\end{abstract}

\begin{IEEEkeywords}
  Self-triggered control, Inter-event times, Networked control systems
\end{IEEEkeywords}
\vspace{-2ex}
\section{Introduction}

Self-triggering is an efficient method for control under
resource constraints. In this method, the control update
times are opportunistic and implicitly determined by a triggering
rule. Thus, understanding inter-event times (IETs) generated by a
  self-triggering rule is necessary for higher level planning and
  scheduling for control over shared or constrained resources as well
  as in the analytical quantification of the usage of
  communication or other resources compared to a time-triggered
controller. With these motivations, in this paper, we carry out a
systematic analysis of the evolution of IETs for linear
systems under region-based self-triggering rules (RBSTRs).
\vspace{-2ex}
\subsection{Literature review}

Event- and self-triggered control have been active areas of research
in the field of networked control systems~\cite{PT:2007, WH:2012,
  ML:2010, DT-SH:2017-book}. In the event-triggered control literature, typically the interest is only in showing the existence of
a positive lower bound on the IETs. Self-triggered
control~\cite{AA:2010} and periodic event triggered
control~\cite{WH:2013} guarantee a positive minimum IET
by design. However, in all these settings, a detailed analysis of the
IETs as a function of the state or time is typically
missing. 

Although it is not common, there are some works that analyze the
average of the IETs such
  as~\cite{KA-BB:2002,BD-AL-DQ:2017, FB:2017, PT-MF-JC:2018-tac,
    SB-PT:2021-cta}. References~\cite{PT:2016, QL:2017,
    JP-JPH-DL:2017, MJK-PT-JC-MF:2020-tac} provide necessary and
sufficient data rates for meeting the control goal with event-triggered control. On the other hand, \cite{BAK-DJA-WPMH:2018,
  FDB-DA-FA:2018} take a different approach and design event
triggering rules that ensure better performance than periodic control
for a given average sampling rate.

To the best of our knowledge, the first paper that studied the
  evolution of IETs generated by event-triggered control
  systems is~\cite{MV-PM-EB:2009}. This paper illustrates the periodic
  and chaotic patterns exhibited by the inter-event sequences of
  continuous time linear time invariant systems under homogeneous
  event-triggering rules. In the literature, it has been observed
that the IETs often settle to a steady state
value. Reference~\cite{RP-RS-WH-2022} seeks to explain this phenomenon
for planar linear systems with relative thresholding based
event-triggering rule, under a ``small'' thresholding parameter
scenario. It provides
sufficient conditions under which the IETs either
converge to some
neighborhood of a given constant or lie in some
neighborhood of a given constant or oscillate in a near periodic
manner. Reference~\cite{AK-MM:2018} proposes a method to characterize
  the sampling behavior of linear time-invariant event-triggered control systems by using finite-state abstractions of the
  system. References~\cite{GD-MM:2020} and \cite{GD-LL-MM:2021} extend
  this idea to nonlinear and stochastic event-triggered control
  systems, respectively. Similarly,~\cite{GAG-MM:2021} proposes an
  approach to estimate the smallest, over all initial states, average
  inter-sample time of a linear system under periodic event-triggered control through finite-state
  abstractions. Reference~\cite{GG-MM:2021} shows the robustness of
  the above approach to small enough model uncertainties. The recent
  paper~\cite{GG-MM:2022} analyzes the chaotic behavior of traffic
  patterns generated by periodic event-triggered control systems with
  the help of abstraction based methods. 
  
  Reference~\cite{GD-MM1:2021} designs self-triggering rules by studying
isochronous manifolds - sets of points in the state space with a given
IET. As the aim of this work is to design self-triggering
rules rather than to analyze IETs resulting from a given
triggering rule, the triggering rule is suitably modified to aid the
analysis. The recent paper~\cite{GA-KM-MM:2021} also proposes a self
triggered control scheme that provides near-maximal average
inter-sample time by using finite-state abstractions of a reference
event-triggered control, and by using ``early triggering''.

In~\cite{AR-PT:2020}, we provide a systematic way to analyze the
  evolution of IETs for planar linear systems under scale
  invariant event-triggering rules. In this work, we study the
  IET as a function of the angle of the state at an
  event, and then study the evolution of the angle of the state from
  one event to the next to indirectly understand the evolution of the
  IETs. Based on this method, we provide conditions for
  the convergence of IETs for the relative thresholding event-triggering rule.

\subsection{Contributions}

The major contribution of our work is that we provide a systematic way
to analyze the IETs, as a function of state or time, for
linear systems under region-based self-triggered control (RBSTC). The
  central idea behind our approach is that the next IET
  is a function of the state at the time of the event. Hence, studying
  the evolution of the state at event times indirectly informs us
  about the evolution of the IETs along the trajectories
  of the closed loop system. We provide quantitative results regarding the steady state behavior of
IETs and provide several necessary conditions and sufficient
conditions for the IETs to converge to a
constant or to a given periodic sequence.

References~\cite{AK-MM:2018, GD-MM:2020, GD-LL-MM:2021,
    GAG-MM:2021, GG-MM:2021, GG-MM:2022} characterize the sampling
  behavior of event-triggered control systems by using finite
  state-space abstractions of the system. However, this approach can
  be computationally very demanding. The
  references~\cite{GAG-MM:2021,GG-MM:2021,GG-MM:2022} analyze the
  periodic patterns exhibited by inter-sampling times of periodic event-triggered control systems, which is a special class of the
  RBSTC systems considered in our
  paper. \cite{GG-MM:2021,GG-MM:2022} provide a sufficient condition for the system to exhibit a given sequence of inter-sampling times under the assumption that the transformation matrix associated with the given inter-sampling time sequence is nonsingular. Additionally if the transformation matrix is mixed and of irrational rotations, then this condition is both necessary and sufficient. On the other hand, our results
  hold for a general class of systems and for arbitrary conic regions
  that are possibly even salient. Reference~\cite{GG-MM:2022} also analyzes the convergence of inter-sampling times to a given sequence and provides a necessary or sufficient condition for the same in some special cases. Compared
  to~\cite{GG-MM:2022} we also provide necessary and sufficient conditions for stability of positively
  invariant rays, subspaces and more general sub-regions all of which
  lead to convergence of IETs to a constant. Our results
  can also be adapted to study convergence of IETs to a
  given periodic pattern.

\vspace{-2.5ex}
\subsection{Notation}

Let $\real$, $\realz$, and $\realp$ denote the set of all real,
non-negative real and positive real numbers, respectively. For
  sets $\Mc{A}$ and $\Mc{B}$, $\Mc{A} \setminus \Mc{B}$ denotes
  $\Mc{A}$ set-minus $\Mc{B}$. Let $\nat$ and $\natz$ denote
the set of all positive and non-negative integers, respectively.
Let $\complex$ and $\complex^n$ denote the set of all complex
  numbers and the n-dimensional complex vector space,
  respectively. For any $x \in \complex^n$, $x^*$ and $x^H$ denote its
  conjugate and conjugate transpose, respectively, and let $\norm{x} \ldef \sqrt{x^Hx}$. For a square matrix $A \in \real^{n \times n}$, $\spec(A)$ denote
  	the spectrum of $A$ and $\rho(A)$ denote
  the spectral radius of
  $A$. $B_{\epsilon}(u)$ represents an $n-$dimensional ball of radius
$\epsilon$ centered at $u \in \complex^n$. For a set
  $M \subset \complex^n$,
  $B_{\epsilon}(M) \ldef \{x \in \complex^n: \text{inf}_{y \in
    M}\norm{x-y} \le \epsilon\}$. For a set $S \subset \real^n$, we let $\cl{S}$ denote the closure of
  $S$ in $\real^n$.
\vspace{-2ex}

\section{Problem Setup}\label{sec:problem-setup}

This section presents the dynamics of the system, the class
  of RBSTRs that we consider and the
objective of this paper.

\subsection{System Dynamics}
Consider a continuous-time, linear time invariant system,
\begin{subequations}\label{eq:system}
  \begin{equation}\label{eq:plant_dyn}
    \dot{x}(t) = Ax(t) + Bu(t),
  \end{equation}
  where \(x\in\real^n\) is the plant state and \(u\in\real^m\) is the
  control input, while $A \in \real^{n\times n}$ and
  $B \in \real^{n\times m}$ are the system matrices. Consider a
  sampled data controller and let \(\{t_k\}_{k\in \natz}\) be the
  sequence of event times at which the state is sampled and the
  control input is updated as follows,
  \begin{equation}\label{eq:control_input}
    u(t)=Kx(t_k), \quad \forall t\in[t_k, t_{k+1}).
  \end{equation}
\end{subequations}
For system~\eqref{eq:system}, we can write the solution \(x(t)\) as
\begin{equation}
  x(t)=G(\tau)x(t_k), \quad  \forall t \in [t_k,t_{k+1}), \label{eq:xt}
\end{equation}
where \(\tau \ldef t - t_k\) and
\begin{equation*}
G(\tau) \ldef \e^{A \tau} + \int_0^{\tau} \e^{A (\tau - s)}
BK \dd s.
\end{equation*}
In the literature, the control gain \(K\) is chosen such that
\(A_c \ldef A+BK\) is Hurwitz and the problem is typically to design a
rule that implicitly determines the sequence of event times
recursively. In this paper, we assume that the event times
\(\{t_k\}_{k\in \natz}\) are generated in a self-triggered manner.

\subsection{Region-Based Self-Triggering Rule}
In the RBSTC method, we partition the
state space into a finite number of conic regions
$R_i \subset \real^n$, for $i \in \{1,2,\ldots,r\}$. We then associate
each region with a fixed IET $\tau_i$.
An alternative way of partitioning the state-space is to first design
an event-triggering rule, which gives the IET as
$\tau_e(x)$.
  Suppose that the event-triggering rule ensures that there exists 
  $\tau_{\min}$, a positive lower bound on the IETs. This 
  is a common guarantee for many event-triggering rules. Similarly, if 
  there is an upper bound on the IETs generated by the 
  event-triggering rule, then we set it to $\tau_{\max}$. Otherwise, we 
  can choose $\tau_{\max}$ as any value such that $\tau_{\max}> 
  \tau_{\min} > 0$. Then, we  choose $\tau_i$'s such that 
  $\tau_1\leq \tau_{\min} <\tau_2<\ldots.<\tau_r \le 
  \tau_{\max} < \tau_{r+1}$.
Then we can partition the state space into $r$ regions as follows,
\begin{equation*}
  R_i \ldef \{cx \in \real^n : \tau_i \le \tau_e(x)<\tau_{i+1}, \ c \ge 
  0, \ x \neq 0\} \quad \forall i \in \{1,\ldots,r\}.
\end{equation*}
In either way, we make the following standing
  assumption.
\begin{enumerate}[resume, label=\textbf{(A\arabic*)}]
\item Each region $R_i$, $\forall i \in \{1, \ldots, r\}$, is a
    cone. $\tau_i \neq \tau_j$, $\forall i \neq j$ and
    $i, j \in \{1, \ldots, r\}$. Also, the intersection of null space
    of $G^l(\tau_i)$ and $R_i$ is $\{0\}$,
    $\forall l \in \{1,2,..,n\}$ and
    $\forall i \in \{1, \ldots, r\}$.  \label{A:R_i}
\end{enumerate}
Notice from~\eqref{eq:xt} that the solution $x(t)$ for each
  $t \in [t_k, t_{k+1})$ is a linear function of $x(t_k)$. Thus, it is
  reasonable to assume that each region $R_i$ is a cone. There is no
  loss of generality in the assumption that $\tau_i \neq \tau_j$ since
  if two different regions have the same $\tau$'s then they can be
  combined into a single region. The assumption that,
  $\forall l \in \{1,2,..,n\}$, the intersection of null space of
  $G^l(\tau_i)$ and $R_i$ is $\{0\}$, $\forall i \in \{1, \ldots, r\}$, is not restrictive as otherwise, it would
  mean that there are non-zero initial conditions for the state from
  which the state evolves to $0$ in finite time, under constant open
  loop control.

Now, we define the RBSTR by setting the
IET as
\begin{equation}\label{eq:STC}
t_{k+1}-t_k=\tau_i, \quad \text{if } x(t_k) \in R_i, \ i \in
\{1,\ldots,r\} .
\end{equation}
Thus, the region to which $x(t_k)$ belongs determines fully the
IET $t_{k+1} - t_k$, and as a result, the set of possible
IETs is finite. Hence, a region-based self-triggered
  controller is easier to implement compared to an event-triggered
  controller, specially if the regions are polytopes. Alternately, it
  suffices to check a triggering rule at a finite set of times as in
  periodic event-triggered control. This property of the IETs makes the analysis of their steady state behavior much easier
than in event-triggered control setting. As a result, it would be
  much easier to integrate RBSTC method
  with higher level planning and scheduling algorithms in the context
  of shared or constrained communication or computational resources.

  The popular periodic event-triggered control
  method could be thought of a special case of the RBSTC method. To the best of our knowledge, the
  idea of RBSTC was first proposed
  in~\cite{CF-etal:2012}, though the name was first introduced
  in~\cite{GD-MM1:2021}.
\subsection{Objective}

The main objective of this paper is to analyze the evolution of
  IETs along the trajectories of system \eqref{eq:system}
  for conic region-based self triggering rules~\eqref{eq:STC}.
  Moreover, we seek to provide analytical guarantees for the
  asymptotic behavior of IETs under these rules. 
  The
  approach we take is to analyze IET and the state at the
  next event as functions of the state at the time of the current
  event.

\section{Analysis of Evolution of Inter-Event Times} \label{sec:iet-self-trig}

In this section, we analyze the evolution of IETs along
the trajectories of system~\eqref{eq:system} under the
RBSTC method~\eqref{eq:STC}.  We carry
out this analysis through a map that describes
the normalized evolution of the state from one event to the
next. This is similar in spirit to our method in the
  event-triggered control setting for planar systems
  in~\cite{AR-PT:2020}. We define the \emph{normalized inter-event
    state jump map} or simply the \emph{gamma map} as
\begin{equation}
  \label{eq:gamma_map}
\tilde{x}(t_{k+1}) = \gamma(\tilde{x}(t_k)) \ldef \frac{G(\tau_i) \tilde{x}(t_k)}{\norm{G(\tau_i) \tilde{x}(t_k)}}, \quad i \text{
    s.t. } \tilde{x}(t_k) \in R_i,
\end{equation}
where $\tilde{x}(t_0) \ldef x(t_0)$. Thus,
$\tilde{x}(t_k) \ldef \frac{x(t_k)}{\norm{x(t_k)}}$ for all
$k \in \nat$.

We normalize the state at each iteration because the IETs
are determined solely by the ``direction'' of the state even if the
sequence $\{x(t_k)\}_{k \in \natz}$ converges to zero. Thus, to
understand the asymptotic behavior of the IETs, one must
know the direction in which the state $x(t_k)$ converges to zero, if
it does. We define a set $\bar{R} \subseteq R_i$ for some
$i \in \{1,\ldots,r\}$ as a \emph{positively invariant subregion (PIS)} if
$\bar{R}$ is positively invariant under the gamma
map~\eqref{eq:gamma_map}, that is $\gamma (x) \in \bar{R}$ for all
$x \in \bar{R}$.%

\begin{lem}\thmtitle{Necessary and sufficient condition for convergence 
of IETs} \label{rem:convergence_of_IET}%
	The IETs along the trajectories of 
	system~\eqref{eq:system} under the RBSTR~\eqref{eq:STC} converge to a steady state value for some
	initial condition if and only if there exists a PIS.
\end{lem}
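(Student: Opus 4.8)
The plan is to prove both implications directly, the key enabling observation being that the set of admissible IETs $\{\tau_1,\ldots,\tau_r\}$ is finite, so a sequence of IETs \emph{converges to a steady state value} if and only if it is eventually constant; equivalently, the sequence of normalized event states $\{\tilde{x}(t_k)\}_{k\in\natz}$ eventually remains within a single region $R_j$. I will also use assumption~\ref{A:R_i} (with $l=1$) to guarantee that $G(\tau_j)$ is injective on $R_j\setminus\{0\}$, so that $\gamma$ and all its iterates are well defined along any orbit that stays in $R_j$.

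For the \emph{only if} direction, I would take an initial condition $x(t_0)$ along whose trajectory the IETs converge. By the finiteness remark there exist $K\in\natz$ and $j\in\{1,\ldots,r\}$ with $\tilde{x}(t_k)\in R_j$ for all $k\ge K$. Put $y\ldef\tilde{x}(t_K)$; then $\gamma^m(y)=\tilde{x}(t_{K+m})\in R_j$ for every $m\in\natz$, and these iterates exist by~\ref{A:R_i}. Hence the forward orbit $\bar{R}\ldef\{\gamma^m(y):m\in\natz\}$ is a nonempty subset of $R_j$ with $\gamma(\bar{R})\subseteq\bar{R}$, i.e.\ a PIS. Conversely, for the \emph{if} direction, suppose $\bar{R}\subseteq R_j$ is a PIS; pick any $x_0\in\bar{R}$ and initialize the system at $x(t_0)=x_0$. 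Since $\gamma$ maps into the unit sphere and $\bar{R}$ is $\gamma$-invariant, $\tilde{x}(t_k)=\gamma^k(x_0)\in\bar{R}\subseteq R_j$ for all $k\ge1$ (the case $k=0$ is covered by $R_j$ being a cone, after rescaling $x_0$ to unit norm). By the RBSTR~\eqref{eq:STC} this gives $t_{k+1}-t_k=\tau_j$ for all $k\ge1$, so the IET sequence is eventually constant and hence converges.

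The argument is essentially bookkeeping, so I do not expect a substantive obstacle. The only two points that need care are: (i) the well-definedness of $\gamma$ and its iterates along the relevant orbits, which is precisely what~\ref{A:R_i} provides; and (ii) making the equivalence between ``the IETs converge'' and ``the normalized event-state orbit is eventually trapped in one region'' fully rigorous, which reduces to the finiteness of the set of possible IETs together with the defining property~\eqref{eq:STC} that the region containing $\tilde{x}(t_k)$ determines $t_{k+1}-t_k$.
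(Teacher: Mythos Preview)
Your proposal is correct and takes essentially the same approach as the paper, which simply remarks that the claim follows from the definition of a PIS together with the assumption in~\ref{A:R_i} that $\tau_i\neq\tau_j$ for $i\neq j$. The only point you leave implicit is that eventual constancy of the IETs forces the normalized orbit to remain in a \emph{single} region $R_j$; this step uses precisely the distinctness assumption $\tau_i\neq\tau_j$, so you should cite it explicitly where you write ``equivalently''.
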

\begin{proof}
	This result follows directly from the definition of a PIS and the assumption that $\tau_i \neq \tau_j$, for $\forall i \neq j$ and
	$i, j \in \{1, \ldots, r\}$.
\end{proof}
Lemma~\ref{rem:convergence_of_IET} establishes the connection
  between convergence of IETs and existence of PIS. So, we can analyze the steady state behavior
  	of the IETs along the trajectories of
  	system~\eqref{eq:system} under the RBSTR~\eqref{eq:STC}, by studying about the existence of a
  	PIS and by studying the stability of
  	such a subregion under the gamma map. Given Assumption~\ref{A:R_i}, it suffices to
  look for PISs that are cones. In fact,
  PISs are closely connected to the
  eigenspaces of the matrices $G(\tau_i)$. In order to present this
  idea in a unified manner irrespective of whether $G(\tau_i)$ has
  real or non-real eigenvalues, we first introduce some notation and a
  couple of definitions.
   
  For a vector $v \in \complex^n$, we define the $\rspantxt$ as
  \begin{equation*} 
    \rspan(v) := \mathrm{span} \{ v + v^*, \sqrt{-1} (v - v^*)
    \}, 
  \end{equation*}
  where the span is over the real numbers. Thus, if $v \in \real^n$
  then $\rspan(v)$ is a line and otherwise it is a plane. We are most
  interested in $\rspan(v)$ for $v$ eigenvectors of a matrix. We
  present this in the following definition.  
  
  \begin{definition}\thmtitle{$\reig$} \label{def:inv-space}
    For a matrix $M \in \real^{n \times n}$ with eigenvalue
    $\lambda \in \complex$, we say
    \begin{equation}
      \label{eq:inv-subspace}
      \invspace_\lambda(M) := \rspan(v), \ \mathrm{s.t.} \ Mv =
      \lambda v, \ v \neq 0,
    \end{equation}
    is an $\reig$ corresponding to eigenvalue $\lambda$. \remend
  \end{definition}
  Note that if $\lambda \in \real$, then $\invspace_\lambda(M)$ is a
  line in $\real^n$ and if $\lambda \in \complex \setminus \real$,
  then $\invspace_\lambda(M)$ is a plane in $\real^n$. Also, note that
  if $\lambda \in \complex \setminus \real$, our terminology ``$\reig$
  corresponding to $\lambda$'' is somewhat imprecise as
  $\invspace_\lambda(M)$ is really an invariant plane under the joint
  action of the complex conjugate eigenvalues $\lambda$ and
  $\lambda^*$. Finally, $\invspace_\lambda(M)$ may not be unique for
  each $\lambda$. If the geometric multiplicity of $\lambda$ is $p$,
  then the span of all possible $\invspace_\lambda(M)$ is a subspace
  of dimension $p$ and $2p$ if $\lambda \in \real$ and
  $\lambda \in \complex \setminus \real$, respectively. Next, we
  present a lemma on $\reig$s corresponding to an eigenvalue. This
  result helps us to provide a necessary condition for the existence
  of a PIS under the gamma
  map~\eqref{eq:gamma_map}.
  
\begin{lem}\thmtitle{Convergence to an
    $\reig$} \label{lem:convergence_to_invsubspace}
  Let $M \in \real^{n \times n}$ be a matrix with its spectrum
  $\spec(M)$ as $\{ \lambda \}$ or $\{ \lambda, \lambda^* \}$ for some
  $\lambda \in \real$ or $\lambda \in \complex \setminus \real$,
  respectively. In either case, suppose the geometric multiplicity of
  $\lambda$ and $\lambda^*$ is one. Then, for any
  $x \in \real^n \setminus \{0\}$, the sequence
  $\left\{ \frac{M^kx}{\norm{M^kx}} \right\}_{k \in \natz}$ converges
  to $\invspace_\lambda(M) \cap B_1(0)$, the intersection of the
  unique $\reig$ corresponding to $\lambda$ with the unit sphere.
\end{lem}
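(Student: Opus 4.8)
\emph{Proof plan.} I would reduce $M$ to real Jordan form and read off the leading‑order behaviour of $M^k x$. The hypotheses — $\spec(M)\subseteq\{\lambda,\lambda^*\}$ and geometric multiplicity one — say that $M$ consists of a single Jordan block: after a real change of basis, $M=\lambda I_n+N$ with $N$ the nilpotent shift when $\lambda\in\real$, and $M$ is a single real Jordan block of size $n/2$ built from the rotation--scaling $2\times 2$ block of $\lambda$ when $\lambda\notin\real$. First I would record that $\lambda\neq 0$ — automatic in the non‑real case, and forced in the real case because otherwise $M$ is nilpotent, $M^kx=0$ for $k\ge n$, and the normalized sequence is not even defined; when $\lambda\neq 0$, $M$ is invertible and $M^kx\neq 0$ for all $k$ and all $x\neq 0$.

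\emph{Real case.} Set $N\ldef M-\lambda I$ (nilpotent, with $\ker N=\invspace_\lambda(M)$, a line, by geometric multiplicity one). For $x\neq 0$ let $m$ be the largest integer with $N^{m-1}x\neq 0$ (so $1\le m\le n$) and put $w\ldef N^{m-1}x$; then $Nw=N^mx=0$, so $w\in\invspace_\lambda(M)\setminus\{0\}$. Expanding $M^k=(\lambda I+N)^k$ and applying it to $x$ (for $k\ge m-1$ only the terms with $j\le m-1$ remain), then dividing by $\binom{k}{m-1}\lambda^{k-m+1}\neq 0$, every lower‑order ratio $\binom{k}{j}/\binom{k}{m-1}$ with $j<m-1$ tends to $0$, so
\[
\frac{M^kx}{\binom{k}{m-1}\lambda^{k-m+1}}\;\xrightarrow{\;k\to\infty\;}\;w .
\]
Hence $\norm{M^kx}$ is of order $|\lambda|^k k^{m-1}$ and $M^kx/\norm{M^kx}$ stays within $o(1)$ of $\{\pm w/\norm{w}\}\subseteq\invspace_\lambda(M)\cap B_1(0)$: it converges to $w/\norm{w}$ if $\lambda>0$ and alternates between $\pm w/\norm{w}$ if $\lambda<0$, which is precisely why the limit has to be stated as a set.

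\emph{Non‑real case.} I would work over $\complex$ with the conjugation‑symmetric splitting $\complex^n=V_\lambda\oplus V_{\lambda^*}$ into generalized eigenspaces, so that a real $x$ decomposes as $x=x_\lambda+\overline{x_\lambda}$ with $x_\lambda\in V_\lambda$. On $V_\lambda$, $M$ acts as $\lambda I+N_\lambda$ with $N_\lambda$ a single nilpotent block, $\ker N_\lambda=\mathrm{span}_\complex\{w_1\}$ for the eigenvector $w_1$ of $\lambda$, and $\invspace_\lambda(M)=\rspan(w_1)$. Running the same expansion inside $V_\lambda$ — with $m$ the largest integer such that $N_\lambda^{m-1}x_\lambda\neq 0$ — gives $M^kx_\lambda\big/\bigl(\binom{k}{m-1}\lambda^{k-m+1}\bigr)\to z\ldef N_\lambda^{m-1}x_\lambda$, a nonzero complex multiple of $w_1$, so $\rspan(z)=\invspace_\lambda(M)$. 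Adding the conjugate, and writing $\lambda=|\lambda|\e^{\sqrt{-1}\theta}$, the leading part of $M^kx=2\,\mathrm{Re}(M^kx_\lambda)$ is, up to a positive scalar,
\[
\mathrm{Re}\bigl(\e^{\sqrt{-1}(k-m+1)\theta}\,z\bigr)+o(1).
\]
For each $k$, $\mathrm{Re}(\e^{\sqrt{-1}(k-m+1)\theta}z)$ lies in $\rspan(z)=\invspace_\lambda(M)$ and is never $0$, because an eigenvector for a non‑real eigenvalue of a real matrix is not a complex scalar times a real vector; so its norm is bounded away from $0$ and $\infty$. Normalizing, $M^kx/\norm{M^kx}$ then stays within $o(1)$ of $\invspace_\lambda(M)\cap B_1(0)$, now generally wandering over this set as $\e^{\sqrt{-1}k\theta}$ rotates, which is the claim.

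\emph{Main obstacle.} The crux, and the only place the single‑Jordan‑block hypothesis is genuinely used, is the identification of the leading term: $M^kx$ grows like $\binom{k}{m-1}\lambda^{k-m+1}$ times $(M-\lambda I)^{m-1}x$, and this last vector is an \emph{eigenvector} of $M$, hence lies in $\invspace_\lambda(M)$. Without the hypothesis it could instead be a strictly higher generalized eigenvector and the limiting direction would escape $\invspace_\lambda(M)$. The non‑real case adds the bookkeeping of the conjugate pair and the point that the dominant term may rotate inside the invariant plane $\invspace_\lambda(M)$, which is exactly why the conclusion is convergence to a set rather than to a single point.
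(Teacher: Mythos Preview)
Your argument is correct and follows essentially the same route as the paper: reduce to a single (real or complex) Jordan block and read off that the dominant contribution in $M^k x=(\lambda I+N)^k x$ is the $\binom{k}{m-1}\lambda^{k-m+1}N^{m-1}x$ term, which lands in the eigenspace. The paper phrases this via ratios of matrix entries $J^k_{i,j}/J^k_{1,j}\to 0$ rather than your $x$-dependent index $m$, and treats the complex case through the two decoupled Jordan blocks rather than your conjugate splitting $x=x_\lambda+\overline{x_\lambda}$, but these are cosmetic differences; your explicit remark that $\lambda\neq 0$ is needed (handled in the paper by the standing Assumption~\ref{A:R_i}) and your discussion of set-versus-point convergence are welcome additions.
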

\begin{proof}
  We prove the result using the Jordan normal form $M_J$ of $M$. If
  $\lambda \in \real$ then $M_J$ contains a single Jordan block of
  size $n$ and if $\lambda \in \complex \setminus \real$ then $M_J$
  contains two decoupled Jordan blocks, each of size $n/2$ and
  corresponding to $\lambda$ and $\lambda^*$, respectively. In the
  latter case, one can obtain the Jordan form $M_J = V M V^{-1}$ by
  picking $V = [ V_1 \ \ V_1^*]$, where the columns of $V_1$ are
  linearly independent generalized eigenvectors corresponding to
  $\lambda$. This allows any $x \in \real^n$ to be expressed as
  $x = (V_1 + V_1^*)y + \sqrt{-1} (V_1 - V_1^*)z $ for
  $y, \ z \in \real^{(n/2)}$.

  As the Jordan blocks are decoupled, it suffices to consider an
  arbitrary Jordan block $J$ of size $p$, corresponding to eigenvalue
  $\lambda$. Note that $J^k$ is an upper triangle matrix with the form
  \begin{align*}
    J^k = \begin{bmatrix}
            \lambda^k & \binom{k}{1}\lambda^{k-1} & \binom{k}{2}\lambda^{k-2} & . & .&  \binom{k}{p-1}\lambda^{k-p+1}\\
            0 &	\lambda^k & \binom{k}{1}\lambda^{k-1} & . & .& \binom{k}{p-2}\lambda^{k-p+2} \\
            . &	. & . & . & .& .\\
            0 &	0 & 0 & . & .& \lambda^k\\
               \end{bmatrix},
  \end{align*}
  for $k \geq p$. Let $J^k_{i,j}$ be the element in $J^k$ at row $i$
  and column $j$. Then, observe that for all
  $j \geq i \in \{2, \ldots, p\}$,
  \begin{equation*}
    \lim_{k \to \infty }\frac{J^k_{i,j}}{J^k_{1,j}} = \lambda^{i-1}
    \frac{(j-1)! (k+1-j)!}{(j-i)! (k+i-j)!} = 0 .
  \end{equation*}
  Then, by invoking linearity, we can infer that
  $\forall x \in \complex^p \setminus \{0\}$, the sequence
  $\left\{ \frac{J^k x}{\norm{J^k x}} \right\}_{k \in \natz}$
  converges to the intersection of the eigenspace of $J$ with the unit
  sphere, which is $\{ %
  \pm \begin{bmatrix} 1 & 0 & \ldots & 0
  \end{bmatrix}\trsp
  \}$. From here, we can conclude that the claim in the result is
  true.
\end{proof}

With the help of Lemma~\ref{lem:convergence_to_invsubspace}, we now
present a necessary condition for the existence of a PIS and 
hence also for the possibility of convergence
of IETs to a constant.

\begin{prop}\thmtitle{Necessary condition for the existence of a
    PIS}\label{prop:existence_of_PIR}
  Consider the system~\eqref{eq:system} under the RBSTR~\eqref{eq:STC}. Suppose there exists a PIS $\bar{R} \subseteq R_i$, for some $i \in \{1,2,..,r\}$.
  Then,
  \begin{itemize}
  \item for each $\reig$, $\invspace_{\lambda}(G(\tau_i))$,
    for each eigenvalue $\lambda \in \spec( G(\tau_i) )$,
    $\invspace_{\lambda}(G(\tau_i)) \cap \bar{R}$ is positively
    invariant under the gamma map~\eqref{eq:gamma_map}.
  \item $\exists \mu \in \realp$ such that
    $S_{\mu}(G(\tau_i)) \cap \cl{\bar{R}} \cap B_1(0) \ne \emptyset$ where
   \begin{equation}\label{eq:gamma_invariant_subspace}
     S_{\mu}(G(\tau_i)) \ldef \mathrm{span}\{\invspace_{\lambda}(G(\tau_i)) :
     |\lambda|=\mu, \lambda \in \spec(G(\tau_i)) \}.
       \end{equation}
     \item for almost all $x \in \bar{R}$, the sequence
       $\{\gamma^k(x)\}_{k \in \natz}$ converges to the set
       $S_{\mu_{\max}}(G(\tau_i)) \cap \cl{\bar{R}} \cap B_1(0)$,
       where
       $\mu_{\max} \ldef \max\{\mu \in \realp : S_{\mu}(G(\tau_i))
       \cap \cl{\bar{R}} \cap B_1(0) \ne \emptyset \}$.
  \end{itemize}
\end{prop}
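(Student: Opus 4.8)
The starting point is the observation that, since $\bar{R} \subseteq R_i$ is positively invariant under $\gamma$, \emph{every} $\gamma$-iterate of a point of $\bar{R}$ stays in $R_i$; hence on $\bar{R}$ the gamma map is nothing but the normalized action of the single matrix $M \ldef G(\tau_i)$, i.e. $\gamma^k(x_0) = M^k x_0 / \norm{M^k x_0}$ for every $x_0 \in \bar{R}$ and $k \in \natz$, and this is well defined and non-zero for all $k$ because, by Assumption~\ref{A:R_i}, no non-zero point of $R_i$ lies in $\ker M^l$ for $l \le n$, hence for no $l$. The proposition is therefore a statement about normalized power iteration of $M$, restricted to the cone $\bar{R}$, to be read off the real Jordan form of $M$. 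For the first bullet the plan is to note that each $\reig$ $\invspace_{\lambda}(M)$ is itself $M$-invariant: if $\lambda \in \real$ it is a real eigenline; if $\lambda = a + \sqrt{-1}\,b \in \complex \setminus \real$ with $Mv = \lambda v$, then $M(v+v^*) = a(v+v^*) + b\,\sqrt{-1}(v-v^*)$ and $M(\sqrt{-1}(v-v^*)) = -b(v+v^*) + a\,\sqrt{-1}(v-v^*)$, so $M$ maps $\invspace_{\lambda}(M)$ into itself. Thus for $x \in \invspace_{\lambda}(M) \cap \bar{R}$ we get $Mx \in \invspace_{\lambda}(M)$, hence $\gamma(x) \in \invspace_{\lambda}(M)$, while $\gamma(x) \in \bar{R}$ because $\bar{R}$ is a PIS; so $\gamma(x) \in \invspace_{\lambda}(M) \cap \bar{R}$.

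For the remaining two bullets the plan is to locate the limit set of $\{\gamma^k(x_0)\}$. Group the eigenvalues of $M$ by modulus, let $W_\mu \subseteq \real^n$ be the real generalized eigenspace associated with the eigenvalues of modulus $\mu$, and write $x_0 = \sum_\mu w_\mu$; let $\mu^*(x_0)$ be the largest $\mu$ with $w_\mu \ne 0$, which is strictly positive since Assumption~\ref{A:R_i} gives $x_0 \notin \ker M^n = W_0$. Using the structure of $J^k$ — precisely the estimate $J^k_{i,j}/J^k_{1,j} \to 0$ for $i \ge 2$ from the proof of Lemma~\ref{lem:convergence_to_invsubspace} — one shows first that $\norm{M^k w_\mu} = o(\norm{M^k w_{\mu^*}})$ for every $\mu < \mu^*(x_0)$, so that $\gamma^k(x_0)$ and $M^k w_{\mu^*}/\norm{M^k w_{\mu^*}}$ have the same limit points, and second that inside $W_{\mu^*(x_0)}$, after factoring out the dominant $k^{d}\mu^{*k}$ growth, only the components lying in the eigenvector span $S_{\mu^*(x_0)}(M)$ survive. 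Hence every limit point of $\{\gamma^k(x_0)\}$ lies in $S_{\mu^*(x_0)}(M)$; as the iterates stay in $\bar{R}$ and on the unit sphere, which is compact, the limit set is a non-empty compact subset of $S_{\mu^*(x_0)}(M) \cap \cl{\bar{R}} \cap B_1(0)$, to which $\gamma^k(x_0)$ converges (its distance to that set tends to $0$). This already proves the second bullet for every $x_0 \in \bar{R}$.

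To close the third bullet one identifies $\mu^*(x_0)$ with $\mu_{\max}$ for almost every $x_0 \in \bar{R}$. Let $\mu_0$ be the smallest eigenvalue modulus for which $\mathrm{span}(\bar{R})$ is contained in the sum $U$ of the generalized eigenspaces of the eigenvalues of modulus $\le \mu_0$ (well defined since $U = \real^n$ when $\mu_0 = \rho(M)$). Then $\mu^*(x_0) \le \mu_0$ for all $x_0 \in \bar{R}$; the exceptional set $\{x_0 \in \bar{R} : \mu^*(x_0) < \mu_0\}$ lies in the intersection of $\bar{R}$ with a proper subspace of $\mathrm{span}(\bar{R})$ (by minimality of $\mu_0$) and hence has measure zero in $\bar{R}$; for $\mu > \mu_0$ the subspaces $S_\mu(M)$ and $\cl{\bar{R}} \subseteq U$ are supported on disjoint generalized eigenspaces, so $S_\mu(M) \cap \cl{\bar{R}} = \{0\}$ and $\mu_{\max} \le \mu_0$; and applying the previous paragraph to any $x_0$ with $\mu^*(x_0) = \mu_0$ gives $S_{\mu_0}(M) \cap \cl{\bar{R}} \cap B_1(0) \ne \emptyset$, hence $\mu_{\max} \ge \mu_0$. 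Therefore $\mu_{\max} = \mu_0 = \mu^*(x_0)$ for almost all $x_0 \in \bar{R}$, and the third bullet follows from the convergence established above.

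The main obstacle is the asymptotic analysis inside $W_{\mu^*(x_0)}$: several eigenvalues of equal modulus but different argument must be handled simultaneously — their normalized orbits rotate, and may be dense in a circle when the argument is an irrational multiple of $\pi$ — together with Jordan blocks of different sizes, and one must extract cleanly that the surviving limit directions lie in the eigenvector span $S_{\mu^*(x_0)}(M)$ and not in the larger generalized eigenspace. Given the $J^k$ formula from Lemma~\ref{lem:convergence_to_invsubspace} this is careful bookkeeping rather than a new idea, but doing it rigorously — and likewise making precise the notion of convergence to a set and the measure-zero exceptional set in the third bullet — is where the effort goes.
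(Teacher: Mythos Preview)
Your proposal is correct and follows essentially the same approach as the paper's proof: both reduce to normalized power iteration of $M = G(\tau_i)$ on $\bar{R}$, invoke $M$-invariance of the $\reig$s for the first bullet, and then use the Jordan-form asymptotics of Lemma~\ref{lem:convergence_to_invsubspace} together with a measure-zero argument for the remaining two. Your treatment is in fact more careful than the paper's---in particular your explicit identification of $\mu_0$ and the verification that $\mu_0 = \mu_{\max}$ fills a step the paper leaves implicit.
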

\begin{proof}
  Suppose there exists a PIS under the
  gamma map~\eqref{eq:gamma_map}, $\bar{R} \subseteq R_i$ for some
  $i \in \{1,2,..,r\}$. This means that for any $x \in \bar{R}$, the
  iterates of the gamma map are given by
  $\gamma^k(x) = \frac{G^k(\tau_i)x}{\norm{G^k(\tau_i)x}}$, for all
  $k \in \nat$ and for the fixed $\tau_i$ corresponding to
  $R_i$. Under the linear transformation $G(\tau_i)$,
  $\invspace_{\lambda}(G(\tau_i))$ for each
  $\lambda \in \spec(G(\tau_i))$ is positively invariant. So, the
  first claim is true. Note that, for each $\mu \in \realp$,
  $S_{\mu}(G(\tau_i))$ is also positively invariant under the linear
  transformation $G(\tau_i)$. Generalizing
  Lemma~\ref{lem:convergence_to_invsubspace}, we can say that for any
  $x \in \real^n$, the sequence $\{\gamma^k(x)\}_{k \in \natz}$
  converges to $S_{\mu}(G(\tau_i)) \cap B_1(0)$ for some
  $\mu \in \realp$. Hence, the second claim is true. To prove the
  final claim, we consider the Jordan normal form of $G(\tau_i)$. Note
  that almost all $x \in \bar{R}$ have a non-zero component along the
  subspace corresponding to at least one of the Jordan blocks
  corresponding to eigenvalues $\lambda$ with
  $|\lambda| = \mu_{\max}$. By a similar argument as in the proof of
  Lemma~\ref{lem:convergence_to_invsubspace}, we can again show that
  for all such initial $x$, the sequence
  $\{\gamma^k(x)\}_{k \in \natz}$ converges to the set
  $S_{\mu_{\max}}(G(\tau_i)) \cap \cl{\bar{R}}$.
\end{proof}

As
  $S_{\mu}( G(\tau_i) ) \cap \bar{R} \subseteq S_{\mu}(G(\tau_i)) \cap
  R_i$ for any $\mu \in \realp$,
  Proposition~\ref{prop:existence_of_PIR} helps in ruling out the
  existence of a PIS in each $R_i$. As we
  have a finite number of regions, we can determine the subspaces
  $S_{\mu}$, defined as in~\eqref{eq:gamma_invariant_subspace},
  corresponding to the $G$ matrix of each region. If one of these
  subspaces intersects with the closure of the corresponding region,
  then there is a possibility that a PIS
  exists. If none of the $S_{\mu}$ subspaces of $G(\tau_i)$, for each
  $i$, intersects with the closure of the corresponding region $R_i$,
  then it implies that there does not exist a PIS. Hence, in that case, the IETs do not
  converge to a steady-state value for any initial state of the
  system. Proposition~\ref{prop:existence_of_PIR} also suggests
  that it is sufficient to study the set
  $S_{\mu_{\max}}(G(\tau_i)) \cap \cl{\bar{R}}$ for any PIS $\bar{R} \subseteq R_i$, for some
  $i \in \{1,2,..,r\}$, as in practice, we would almost surely not
  observe convergence of trajectories to any other PISs. 

Next, we provide a necessary
  and sufficient condition for the existence of a PIS that is a subspace.

  \begin{prop}\thmtitle{Necessary and sufficient condition for the
      existence of a positively invariant subspace}
    Consider the system~\eqref{eq:system} under the RBSTR~\eqref{eq:STC}. Then there exists a positively invariant subspace $\bar{R} \subseteq R_i$ if and only if there is an
    $\reig$ $\invspace_{\lambda}(G(\tau_i))\subseteq R_i$
    for some eigenvalue $\lambda \in \spec(G(\tau_i))$, where
    $\invspace_{\lambda}(G(\tau_i))$ is as defined
    in~\eqref{eq:inv-subspace}.
  \end{prop}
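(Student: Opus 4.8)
The plan is to reduce the statement about positive invariance under the gamma map to a purely linear‑algebraic statement about $G(\tau_i)$, namely that a subspace $\bar{R}\subseteq R_i$ is a PIS if and only if it is invariant under $G(\tau_i)$, and then invoke standard facts about real invariant subspaces of real matrices. Throughout, Assumption~\ref{A:R_i} is used to guarantee that $\gamma$ is well defined on $R_i\setminus\{0\}$ (since $G(\tau_i)x\neq 0$ there) and that the $R_i$ are cones forming a partition, so that the index in~\eqref{eq:gamma_map} is the one attached to $R_i$.

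For the ``only if'' direction, I would start from a (nontrivial) positively invariant subspace $\bar{R}\subseteq R_i$. For every nonzero $x\in\bar{R}$ we have $x\in R_i$, hence $\gamma(x)=G(\tau_i)x/\norm{G(\tau_i)x}\in\bar{R}$; since $\bar{R}$ is a linear subspace it is closed under positive scaling, so $G(\tau_i)x\in\bar{R}$, i.e.\ $G(\tau_i)\bar{R}\subseteq\bar{R}$. Now restrict $G(\tau_i)$ to the real subspace $\bar{R}$ of dimension at least one: this operator has an eigenvalue $\lambda$, which is then also in $\spec(G(\tau_i))$. If $\lambda\in\real$, pick a corresponding real eigenvector $v\in\bar{R}$, so that $\invspace_\lambda(G(\tau_i))=\rspan(v)=\mathrm{span}\{v\}\subseteq\bar{R}\subseteq R_i$. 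If $\lambda\in\complex\setminus\real$, complexify $\bar{R}$ to obtain an eigenvector $w=a+\sqrt{-1}\,b$ with $a,b\in\bar{R}$ linearly independent over $\real$; then $G(\tau_i)w=\lambda w$ holds for the full matrix as well, and $\invspace_\lambda(G(\tau_i))=\rspan(w)=\mathrm{span}\{a,b\}\subseteq\bar{R}\subseteq R_i$. Either way we have exhibited an $\reig$ contained in $R_i$.

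For the ``if'' direction, suppose $\invspace_\lambda(G(\tau_i))\subseteq R_i$ for some $\lambda\in\spec(G(\tau_i))$, and set $\bar{R}:=\invspace_\lambda(G(\tau_i))$. By Definition~\ref{def:inv-space} and~\eqref{eq:inv-subspace}, $\bar{R}$ is a subspace (a line if $\lambda\in\real$, a plane otherwise) and it is invariant under $G(\tau_i)$: a real eigenvector spans a $G(\tau_i)$-invariant line, and for $\lambda\in\complex\setminus\real$ the plane $\rspan(v)$ is exactly the real invariant plane on which $G(\tau_i)$ acts as the rotation–scaling associated with the pair $\lambda,\lambda^*$. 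For any nonzero $x\in\bar{R}\subseteq R_i$, Assumption~\ref{A:R_i} gives $G(\tau_i)x\neq 0$, and $G(\tau_i)x\in\bar{R}$ implies $\gamma(x)=G(\tau_i)x/\norm{G(\tau_i)x}\in\bar{R}$ since $\bar{R}$ is closed under scaling. Hence $\bar{R}$ is a positively invariant subspace contained in $R_i$.

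The step requiring the most care is the linear‑algebra core of the ``only if'' direction: verifying that every nonzero real $G(\tau_i)$-invariant subspace contains either a real eigenvector or a two‑dimensional invariant subspace of the form $\rspan(v)$, and that the eigenvector (resp.\ conjugate pair) obtained from the \emph{restricted} operator is genuinely an eigenvector (resp.\ conjugate pair) of $G(\tau_i)$ itself, so that the object produced is precisely an $\invspace_\lambda(G(\tau_i))$ in the sense of~\eqref{eq:inv-subspace}. One also has to note the implicit nondegeneracy conventions (a salient $R_i$ admits no nonzero subspace, and $\gamma$ is undefined at the origin), but these are dispatched directly by the standing assumptions and the definition of a PIS.
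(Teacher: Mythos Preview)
Your proposal is correct and follows essentially the same approach as the paper's own proof: both reduce positive invariance under the gamma map to $G(\tau_i)$-invariance of the subspace $\bar{R}$, and then invoke the standard fact that any nonzero real invariant subspace of a real matrix contains either a real eigenvector or a two-dimensional plane of the form $\rspan(v)$ for a complex eigenvector $v$. Your version is somewhat more explicit about the linear-algebraic details (restriction, complexification, and why the eigenvector of the restricted operator is an eigenvector of $G(\tau_i)$ itself), but the underlying argument is identical.
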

	
  \begin{proof}
    By definition, $\invspace_{\lambda}(G(\tau_i))$ is positively
    invariant under the transformation $G(\tau_i)$. As
    $\invspace_{\lambda}(G(\tau_i))\subseteq R_i$,
    $\invspace_{\lambda}(G(\tau_i))$ is also a PIS under the gamma map~\eqref{eq:gamma_map}. Now, let us
    prove the converse of this statement. Let there exist a positively
    invariant subspace $\bar{R} \subseteq R_i$ for some
    $i \in \{1,\ldots,r\}$. Then $\bar{R}$ is a $G(\tau_i)$-invariant
    subspace. Thus, either $\bar{R}$ contains a real eigenvector of
    $G(\tau_i)$ corresponding to a real eigenvalue or $G(\tau_i)$ has
    complex conjugate eigenvalues with a compelx eigenvector $v$ that
    generates
    $\invspace_{\lambda}(G(\tau_i)) \subseteq \bar{R} \subseteq
    R_i$. This completes the proof of this result.
  \end{proof}
\vspace{-1.5ex}
Next we talk about a special class of PISs called positively invariant rays.
\begin{rem}\thmtitle{Positively invariant ray (PIR)} \label{rem:STC_fixed_ray}
  Consider the system~\eqref{eq:system} under the self-triggering
  rule~\eqref{eq:STC}. If $\exists x \in R_i$ such that
  $G(\tau_i)x=\alpha x$ for some $\alpha \in \realz$ and for some
  $i \in \{1,\ldots,r\}$, then $\{\beta x: \beta \ge 0\} \subseteq R_i$
  is a PIS, and we refer to it as a positively
  invariant ray (PIR). \remend
\end{rem}

It is easy to check for the existence of a PIR as
we have finite, to be specific $r$, number of regions. We can
determine the eigenvectors of $G(\tau_i)$ for each
$i \in \{1,\ldots,r\}$ and check if any of them belongs to the
corresponding region. Note that a PIS need
not always contain a PIR. Next, we provide a
necessary condition for the existence of a PIS that does not contain a PIR.

  \begin{prop}\thmtitle{Necessary condition for the existence of a
      PIS that does not contain a
      PIR}\label{prop:existence_of_PIRegion_without_PIRay}
    Consider the system~\eqref{eq:system} under the RBSTR~\eqref{eq:STC}. There exists a PIS    $\bar{R} \subseteq R_i$, for some $i \in \{1,\ldots,r\}$ whose
    closure does not contain a PIR only if either
    one of the following conditions holds.
    \begin{itemize}
    \item $\exists$ an eigenvector $v$ of $G(\tau_i)$ corresponding to
      a real negative eigenvalue $\lambda$ such that
      $\invspace_{\lambda}(G(\tau_i)) \subseteq \bar{R}$.
    \item $G(\tau_i)$ has two distinct eigenvalues with same magnitude
      $\mu$ such that
      $S_{\mu}(G(\tau_i)) \cap \cl{\bar{R}} \ne \{0\}$.
    \end{itemize}
  \end{prop}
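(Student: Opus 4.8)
The plan is to run everything off Proposition~\ref{prop:existence_of_PIR}. Assuming $\bar{R}\subseteq R_i$ is a PIS whose closure contains no PIR, the second bullet of that proposition gives a well-defined $\mu_{\max}\in\realp$ together with a unit vector $v\in S_{\mu_{\max}}(G(\tau_i))\cap\cl{\bar{R}}\cap B_1(0)$. I would then do a case split on the eigenvalues of $G(\tau_i)$ of magnitude $\mu_{\max}$, using throughout two structural facts: $S_\mu(G(\tau_i))$ is assembled from genuine $\reig$s (so when only one eigenvalue has a given magnitude, $S_\mu$ is exactly that eigenspace and its nonzero elements are actual eigenvectors), and $\cl{\bar{R}}$ is a closed cone on which $\gamma$ is continuous and well defined away from $0$ by Assumption~\ref{A:R_i}.

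The first branch is when $G(\tau_i)$ has at least two distinct eigenvalues of magnitude $\mu_{\max}$ — note this already covers the case of a single non-real conjugate pair $\lambda,\lambda^*$. Here the second condition holds verbatim with $\mu=\mu_{\max}$, since the two distinct eigenvalues exist and $S_{\mu_{\max}}(G(\tau_i))\cap\cl{\bar{R}}\ni v\neq 0$. The remaining branch is when a unique eigenvalue $\lambda$ has magnitude $\mu_{\max}$; as complex eigenvalues of a real matrix come in pairs, $\lambda$ is real, so $\lambda=\mu_{\max}$ or $\lambda=-\mu_{\max}$, and $v$ is a genuine eigenvector, $G(\tau_i)v=\lambda v$. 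If $\lambda=\mu_{\max}>0$, then $\{\beta v:\beta\ge 0\}\subseteq\cl{\bar{R}}$ (cone property) is a PIR in $\cl{\bar{R}}$ for $G(\tau_i)$, contradicting the hypothesis; so this subcase is vacuous. If $\lambda=-\mu_{\max}<0$, I would show $-v\in\cl{\bar{R}}$ as well: take $x_n\in\bar{R}$ with $x_n\to v$; for large $n$, $x_n\ne 0$, hence $G(\tau_i)x_n\ne 0$ and $\gamma(x_n)\in\bar{R}$, and by continuity $\gamma(x_n)\to G(\tau_i)v/\norm{G(\tau_i)v}=-v$. Then $\cl{\bar{R}}$ is a cone containing both $v$ and $-v$, so it contains $\mathrm{span}\{v\}=\rspan(v)=\invspace_\lambda(G(\tau_i))$, which is the first condition (stated for $\bar{R}$; it reads verbatim once $\bar{R}$ is taken closed, as is natural in view of the discussion after Proposition~\ref{prop:existence_of_PIR}, and otherwise holds with $\bar{R}$ replaced by $\cl{\bar{R}}$). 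Combining: ruling out the positive subcase forces either the first branch (second condition) or the negative subcase (first condition).

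The main obstacle I anticipate is the negative-eigenvalue subcase: extracting the antipodal direction $-v$ from positive invariance plus continuity of $\gamma$, and carrying the attendant $\cl{\bar{R}}$-versus-$\bar{R}$ bookkeeping cleanly. A secondary point that needs care is exhaustiveness of the split — in particular making sure a non-real conjugate pair of equal-magnitude eigenvalues is counted as ``two distinct eigenvalues of the same magnitude,'' so that it is captured by the second condition rather than being mishandled as a single dominant mode.
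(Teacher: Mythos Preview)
Your approach is correct and essentially the same as the paper's: invoke Proposition~\ref{prop:existence_of_PIR} to obtain some $\mu$ with $S_\mu(G(\tau_i))\cap\cl{\bar{R}}\neq\{0\}$, then split on whether more than one eigenvalue has that magnitude, with the single-positive-real subcase ruled out by the no-PIR hypothesis. The paper compresses this into a three-line contrapositive and does not spell out either the negative-eigenvalue argument or the $\bar{R}$-versus-$\cl{\bar{R}}$ point you flag; your version is more explicit but not materially different.
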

  \begin{proof}
    Suppose neither of the two given conditions are satisfied. Then according to
    Proposition~\ref{prop:existence_of_PIR}, $\cl{\bar{R}}$ should
    contain an eigenvector of $G(\tau_i)$ corresponding to a real positive eigenvalue. So, the claim is true.
  \end{proof}

  Given the results and observations so far, we can give a
  non-exhaustive classification of the different classes of PISs that are possible in general.

\begin{rem}\thmtitle{Classification of PISs} \label{rem:classes_of_PI_subregions}
  A non-exhaustive list of possible PISs,
  $\bar{R}$ that are subsets of $R_i$ for some
  $i \in \{1, \ldots, r\}$, under the gamma map~\eqref{eq:gamma_map}
  is as follows. In the list, we also provide the conditions under
  which each case would occur.
  \begin{itemize}
  \item $\bar{R}$ a ray that is in the span of an eigenvector of
    $G(\tau_i)$ corresponding to a positive eigenvalue.
  \item $\bar{R}$ a union of finitely many rays, which are in the span
    of eigenvectors of $G(\tau_i)$ corresponding to eigenvalues
    $\lambda > 0$ and eigenvalues $- \lambda < 0$.
  \item $\bar{R}$ a union of finitely many rays, which are in the span
    of $\invspace_{\lambda}(G(\tau_i))$ for an eigenvalue
    $\lambda \in \complex \setminus \real$ and $\text{Arg}(\lambda)$
    is a rational multiple of $\pi$.
  \item $\bar{R}$ a plane of the form $\invspace_{\lambda}(G(\tau_i))$
    for an eigenvalue $\lambda \in \complex \setminus \real$.
  \item $\bar{R}$ that is a subspace spanned by a set of real
    generalized eigenvectors or pairs of complex conjugate generalized
    eigenvectors of the $G(\tau_i)$.
  \item $\bar{R}$ that is the span or more generally the cone
    generated by PISs that belong to the
    classes mentioned above.
  \end{itemize}
\end{rem}

As we see from this list, there are many possibilities for the
PISs in a region $R_i$. Giving unified
necessary and sufficient conditions for their existence would be
cumbersome and involve arguments and ideas that are repetitive. At the
same time, handling each case separately is relatively straight
forward. So, we skip presenting further analysis of the specific cases
for the sake of brevity.
\vspace{-2ex}
\subsection{Stability analysis of positively invariant subregions}

In this subsection, we analyze stability of a PIS. Notice that if the state $x(t_k)$ converges to
  $0$ then it converges to every subspace. However, we are really
  interested in the direction in which the state evolves whether it
  converges to zero or not since the IETs are determined
  by only the ``direction'' of the state. Hence, we employ the gamma
  map~\eqref{eq:gamma_map} for the stability analysis. Notice that
  given Assumption~\ref{A:R_i}, $\gamma(x) \in B_1(0)$ for any
  $x \in \real^n \setminus \{0\}$. Thus, given a PIS $\mathscr{M} \subset R_i$, for some
  $i \in \{1, \ldots, r\}$, we study stability of the set
  $\mathscr{M} \cap B_1(0)$ under the gamma map. Considering the
  variety of PISs that may exist, as seen
  in Remark~\ref{rem:classes_of_PI_subregions}, we focus the stability
  analysis to PISs $\mathscr{M}$ that are
  the intersection of $R_i$ and the generalized eigenspace of
  $G(\tau_i)$ corresponding to an eigenvalue
  $\lambda \in \spec(G(\tau_i))$ for some $i \in \{1, \ldots, r\}$. We
  make this choice as such sub-regions are fundamental and we can
  still present reasonably unified results that one may easily
  generalize. Next, we present a
  result which provides a necessary and sufficient condition for
  stability, asymptotic stability and instability of the intersection
  of such a PIS with the unit sphere.

  \begin{thm}\thmtitle{Necessary and sufficient condition for the
      intersection of a PIS with the unit
      sphere to be stable} \label{thm:asy_stable_ray}
    Consider system~\eqref{eq:system} under the RBSTR~\eqref{eq:STC}. Suppose the cone $R_i$, for some
    $i \in \{1,2,..,r\}$, is solid and let there exist a closed
    PIS $\bar{R}$ such that
    $\bar{R}\setminus\{0\}$ is in the interior of $R_i$. For each
    $\lambda \in \spec(G(\tau_i))$, let
    $\hat{\mathscr{M}} := \mathscr{M} \cap B_1(0)$, where
    \begin{equation*}
      \mathscr{M} := \bar{R} \cap
      \mathrm{span}\{\rspan(v):(G(\tau_i)-\lambda I)^lv=0, \ l \in \nat\} .
    \end{equation*}
    $\mathscr{M}$ is a PIS in $R_i$ or
    equivalently $\hat{\mathscr{M}}$ is positively invariant under the
    gamma map. Further if, $\mathscr{M} \ne \{0\}$ or equivalently
    $\hat{\mathscr{M}} \neq \emptyset$ then the following hold.
    
    \begin{itemize}
  \item If $\lambda$ is non-defective then, under the gamma map~\eqref{eq:gamma_map},
  $\hat{\mathscr{M}}$ is
  stable if and only if $|\lambda|=\rho(G(\tau_i))$ and all $q \in \spec(G(\tau_i))\setminus\{\lambda, \lambda^{\star}\}$, such that $|q|=|\lambda|$, are non-defective.
  
\item If $\lambda$ is defective then $\hat{\mathscr{M}}$ is stable if
  and only if $|\lambda|=\rho(G(\tau_i))$,
  $\mathrm{span}\{\rspan(v):(G(\tau_i)-\lambda I)^lv=0, \ l \in \nat\}
  \subseteq \mathscr{M}$ and $|q| < |\lambda|$ for all
  $q \in \spec(G(\tau_i)) \setminus \{\lambda, \lambda^{\star}\}$.
  
\item $\hat{\mathscr{M}}$ is asymptotically stable if and only if it
  is stable, $|q| < |\lambda|$ for all
  $q \in \spec(G(\tau_i)) \setminus \{\lambda, \lambda^{\star}\}$, and
  one of the following conditions holds: (a)
  $\rspan(v) \subseteq \mathscr{M}$, for every eigenvector $v$ of
  $G(\tau_i)$ corresponding to $\lambda$, or (b) $\lambda \in \realp$
  is an eigenvalue with algebraic multiplicity equal to one.
   
  \item $\hat{\mathscr{M}}$ unstable if and only if it is not stable.
  \end{itemize}
		
\end{thm}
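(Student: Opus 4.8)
The plan is to reduce the whole statement to the projectivized action of the single linear map $G\ldef G(\tau_i)$ and then read the behavior off the real Jordan form of $G$. Write $\rho\ldef\rho(G)$, $\mu\ldef|\lambda|$, and let $\invspace\ldef\mathrm{span}\{\rspan(v):(G-\lambda I)^lv=0,\ l\in\nat\}$ be the real generalized eigenspace of $\lambda,\lambda^\star$, so $\mathscr{M}=\bar R\cap\invspace$. Since $\invspace$ is $G$-invariant, $\bar R$ is a PIS, and on $\bar R\setminus\{0\}\subset\mathrm{int}(R_i)$ the gamma map is exactly $x\mapsto Gx/\norm{Gx}$ (well defined by Assumption~\ref{A:R_i}), we get $\gamma(\mathscr{M})\subseteq\bar R\cap\invspace=\mathscr{M}$, which proves that $\mathscr{M}$ is a PIS, i.e. $\hat{\mathscr{M}}$ is positively invariant. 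For the stability claims, compactness of $\hat{\mathscr{M}}$ together with $\bar R\setminus\{0\}\subset\mathrm{int}(R_i)$ (here $R_i$ solid is used) yields a neighborhood $\mathcal{U}$ of $\hat{\mathscr{M}}$, away from $0$, on which $\gamma$ coincides with $\Gamma(x)\ldef Gx/\norm{Gx}$; hence $\hat{\mathscr{M}}$ is stable / asymptotically stable / unstable under $\gamma$ if and only if it is so under $\Gamma$ (for the positive statements shrink the tolerance below the width of $\mathcal{U}$; for instability note that an escaping $\Gamma$-orbit and the $\gamma$-orbit coincide until the first exit from $\mathcal{U}$). From here on I argue with $\Gamma$.

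Next I would set up the Jordan machinery. Decompose $\real^n=\invspace\oplus\invspace^c$, where $\invspace^c$ is the sum of the real generalized eigenspaces of the $q\in\spec(G)\setminus\{\lambda,\lambda^\star\}$, so that $G=G_\invspace\oplus G_{\invspace^c}$. Exactly as in the proof of Lemma~\ref{lem:convergence_to_invsubspace} (explicit powers of Jordan blocks), for $0\ne u\in\invspace$ one has $\norm{G^ku}=\Theta\!\left(k^{a(u)}\mu^k\right)$, where $a(u)\in\{0,\dots,d_\lambda-1\}$ is the index of the highest Jordan layer on which $u$ has a nonzero component ($d_\lambda$ the index of $\lambda$; $a(u)=0$ exactly when $u$ is a combination of eigenvectors of $\lambda,\lambda^\star$), and $G^ku/\norm{G^ku}$ converges to the eigendirection of $\lambda$ picked out by that top layer, but not monotonically when $\lambda$ is defective; the analogous statement holds on $\invspace^c$. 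Two consequences I record: (i) $\hat{\mathscr{M}}$ is compact and $\Gamma$-invariant, so the above convergence forces $\hat{\mathscr{M}}$ to contain an eigendirection of $\lambda$; (ii) writing a small perturbation of $p\in\hat{\mathscr{M}}$ as $x=(p+\eta_\invspace)+\eta^c$ with $\eta_\invspace\in\invspace$, $\eta^c\in\invspace^c$, and $a_k\ldef G^k(p+\eta_\invspace)$, $b_k\ldef G^k\eta^c$, one has $\mathrm{dist}\!\left(\Gamma^k(x),\hat{\mathscr{M}}\right)\le\mathrm{dist}\!\left(a_k/\norm{a_k},\hat{\mathscr{M}}\right)+2\norm{b_k}/\norm{a_k}$.

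The four bullets then split cleanly. For necessity: if $\mu<\rho$, or $\mu=\rho$ but some $q\ne\lambda,\lambda^\star$ with $|q|=\mu$ is defective, take $p\in\hat{\mathscr{M}}$ (an eigendirection of $\lambda$, so $a(p)=0$, when $\lambda$ is non-defective) and perturb by $\delta v_q$ with $v_q$ a top-layer (generalized) eigenvector of $q$; then $\norm{G^kv_q}$ beats $\norm{G^kp}$ in rate, so $\Gamma^k$ drifts into the $q$-generalized eigenspace, at a fixed positive distance from $\hat{\mathscr{M}}\subset\invspace$, which is instability. If moreover $\lambda$ is defective and $\mathscr{M}\subsetneq\invspace$, choose $p\in\hat{\mathscr{M}}$ and a tiny $\eta_\invspace\in\invspace$ with $p+\eta_\invspace\notin\mathscr{M}$ having a nonzero component on a strictly higher layer than $p$; the explicit Jordan-block formula shows that at a suitable time $k$, of order a negative power of $\norm{\eta_\invspace}$, the vector $a_k$ is dominated by its higher-layer part, so $a_k/\norm{a_k}$ ``swings'' a distance $\Omega(1)$ away from $\hat{\mathscr{M}}$ --- again instability. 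These give the ``only if'' halves of the first two bullets. For sufficiency, under the stated hypotheses I would note: first, $\lambda\ne0$ (otherwise $\invspace=\ker G^n(\tau_i)$ meets $R_i$ only in $\{0\}$ by Assumption~\ref{A:R_i}, forcing $\mathscr{M}=\{0\}$), so $G_\invspace$ is invertible and $\norm{a_k}\ge c\mu^k$ for all $k$, with $c>0$ depending only on $p$ and a lower bound on $\norm{p+\eta_\invspace}$; second, $\norm{b_k}\le C\mu^k\norm{\eta^c}$ in the non-defective case (all eigenvalues of modulus $\mu$ outside $\{\lambda,\lambda^\star\}$ are non-defective) and $\norm{b_k}\le C k^N\kappa^k\norm{\eta^c}$ with $\kappa<\mu$ in the defective case; third, for the $\invspace$-part: in the non-defective case $\Gamma|_\invspace$ acts, in a suitable inner product, as the identity, the antipodal map, or an orthogonal transformation according as $\lambda$ is real positive, real negative, or non-real, hence isometrically on $\invspace\cap B_1(0)$, so $a_k/\norm{a_k}$ stays within $O(\norm{\eta_\invspace})$ of $\Gamma^k(p)\in\mathscr{M}$, while in the defective case $p+\eta_\invspace\in\invspace\subseteq\mathscr{M}$ forces $a_k/\norm{a_k}\in\invspace\cap B_1(0)=\hat{\mathscr{M}}$. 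Substituting into the distance bound gives $\mathrm{dist}(\Gamma^k(x),\hat{\mathscr{M}})=O(\norm{\eta})$, i.e. stability.

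Finally, for asymptotic stability: if some $q\ne\lambda,\lambda^\star$ has $|q|=\mu$ then (necessarily non-defective, by the stable case) a transverse perturbation $\delta v_q$ survives at relative size $\asymp\delta$, so $\Gamma^k\not\to\hat{\mathscr{M}}$; hence one must have $|q|<\mu$ for all such $q$, which makes $\norm{b_k}/\norm{a_k}\to0$ so the transverse part vanishes. It then remains to decide when $a_k/\norm{a_k}\to\mathscr{M}$ for all small $\eta_\invspace\in\invspace$: by the convergence recalled above the limit is the eigendirection of $\lambda$ selected by the top layer of $p+\eta_\invspace$, and this lies in $\mathscr{M}$ for every admissible $p,\eta_\invspace$ precisely when every eigendirection of $\lambda$ is contained in $\mathscr{M}$ --- condition (a) --- with the single exception that if $\invspace$ is one-dimensional and $\lambda\in\realp$ then $p+\eta_\invspace$ is a positive multiple of $p$ and $a_k/\norm{a_k}=p$ for all $k$, which is condition (b); and if neither holds one exhibits a small $\eta_\invspace$ keeping $a_k/\norm{a_k}$ at distance $\asymp\norm{\eta_\invspace}$ from $\mathscr{M}$, so $\hat{\mathscr{M}}$ is stable but not asymptotically stable. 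The fourth bullet is immediate, since ``unstable'' is by definition the negation of ``stable''. The main obstacle is the within-$\invspace$ analysis when $\lambda$ is defective: establishing the $\Omega(1)$ ``swing'' (which needs the precise real-Jordan-block power formula and the right choice of time index as a function of $\norm{\eta_\invspace}$) and, on the sufficiency side, the uniform lower bound $\norm{a_k}\ge c\mu^k$ (a real Jordan block can momentarily make $\norm{G_\invspace^ku}$ only $\asymp\mu^k$ although generically it is $\asymp k^{d_\lambda-1}\mu^k$, and for non-real $\lambda$ the two conjugate blocks are coupled by a rotation); the remaining steps are bookkeeping with Jordan forms and the triangle inequality.
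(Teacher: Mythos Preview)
Your approach is essentially the paper's: reduce to the projectivized linear map $\Gamma(x)=Gx/\norm{Gx}$ via the interior assumption, pass to real Jordan form, split the state into the $(\lambda,\lambda^\star)$-generalized-eigenspace part and its spectral complement, and compare growth rates through an estimate of the type $\norm{u/\norm{u}-v/\norm{v}}\le 2\norm{u-v}/\norm{v}$. The paper organizes the bookkeeping a little differently---it normalizes $J\ldef G(\tau_i)/|\lambda|$, partitions $\spec(J)$ into the four groups $|q|>1$, $q\in\{\lambda/|\lambda|,\lambda^\star/|\lambda|\}$, other $|q|=1$, and $|q|<1$, and compares $J^kz/\norm{J^kz}$ with $J^k\hat z/\norm{J^k\hat z}$ for $\hat z$ the \emph{nearest} point of $\hat{\mathscr{M}}$---but the substance (and in particular the instability ``swing'' and the isometric action in the non-defective case) is the same.

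One step as you wrote it would fail. The claimed lower bound $\norm{a_k}\ge c\mu^k$ with $c$ depending only on $p$ and a lower bound on $\norm{p+\eta_\invspace}$ is false when $\lambda$ is defective: for a single real Jordan block of size $d$ with eigenvalue $\mu$, taking $p=e_1$ and $\eta_\invspace=-\alpha e_2$ one gets $\norm{(G_\invspace/\mu)^k(p+\eta_\invspace)}$ dipping to order $\alpha$ at $k\approx 1/\alpha$, so no $c$ independent of $\eta_\invspace$ exists. The correct uniform statement is that the smallest singular value of $(G_\invspace/\mu)^k$ decays only polynomially, giving $\norm{a_k}\ge c\,\mu^k/k^{d-1}$; the paper invokes exactly this fact (via an external reference on singular values of powers of Jordan blocks). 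This weaker bound is all you need, since in the defective stable case every other eigenvalue has modulus $\kappa<\mu$, whence $\norm{b_k}/\norm{a_k}\le C\,k^{N+d-1}(\kappa/\mu)^k\norm{\eta^c}$ is still uniformly $O(\delta)$. Your closing paragraph shows you already suspected the difficulty; the fix is simply to replace $c\mu^k$ by $c\mu^k/k^{d-1}$ and carry the extra polynomial factor through.
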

\begin{proof}
  The $\real$-generalized eigenspace corresponding to each
  $\lambda \in \spec(G(\tau_i))$, i.e.
  $\mathrm{span}\{\rspan(v):(G(\tau_i)-\lambda I)^lv=0, \ l \in
  \nat\}$, is positively invariant under the linear map
  $G(\tau_i)$. Thus, $\mathscr{M}$ and $\hat{ \mathscr{M} }$ are
  positively invariant under the gamma map.

  In order to prove the rest of the claims, we first make some general
  observations and setup some notation. Note that for any $x \in R_i$
  such that $\gamma^j(x) \in R_i, \ \forall j \in \natz \cap [0, k]$,
  \begin{equation*}
    \gamma^k(x) = \frac{ G^k(\tau_i) x }{ \norm{G^k(\tau_i) x} } =
    \frac{ J^k x }{ \norm{J^k x} }, \quad J := \frac{ G(\tau_i) }{|
      \lambda |}.
  \end{equation*}
  Consider the partition $\{q_i\}_{i=1}^4$ of all eigenvalues
  of $J$, where
  \begin{align*}
    &q_1 := \{ q \in \sigma(J) : |q| > 1 \}, \ 
      q_2 := \{ \lambda / |\lambda| , \ \lambda^* / |\lambda| \},
    \\
    &q_3 := \{ q \in \sigma(J) \setminus q_2 : |q| = 1 \}, \
      q_4 := \{ q \in \sigma(J) : |q| < 1 \} .
  \end{align*}
  We assume without loss of generality that $G(\tau_i)$ and hence $J$
  is in the real Jordan form. Let $J := \text{diag}(J_1,J_2,J_3,J_4)$,
  where each $J_i$ is the block diagonal matrix of all Jordan blocks
  of $J$ corresponding to the eigenvalues in $q_i$. Let %
  $
  \begin{bmatrix}%
    E_1 & E_2 & E_3 & E_4
  \end{bmatrix} := I_n
  $, %
  where $I_n$ is the $n \times n$ identity matrix and for each
  $i$, $E_i$ has the same number of columns as $J_i$.
   
  Now, we prove the sufficiency for stability of
  $\hat{ \mathscr{M} }$. As $\hat{\mathscr{M}}$ is in the interior of
  $R_i$, we can find a $\rho >0$ such that
  $B_{\rho}(\hat{\mathscr{M}}) \subset R_i$. Let
  $z \ldef \sum_{j=1}^{4} E_jz_j \in B_{\delta}(\hat{\mathscr{M}})$,
  for some $\delta \in (0, 1)$, and let
  $\hat{z} \ldef E_2\hat{z}_2 \in \displaystyle \argmin_{y \in \hat{
      \mathscr{M} }}\norm{z-y}$, which implies that
  $\norm{z-\hat{z}}\le\delta$. Note that, for any $k \in \natz$,
  \begin{equation}\label{eq:stability_inequality}
  \begin{aligned}
 \norm{\frac{J^kz}{\norm{J^kz}}-\frac{J^k\hat{z}}{\norm{J^k\hat{z}}}}&=\norm{\frac{J^kz}{\norm{J^kz}}-\frac{J^kz}{\norm{J^k\hat{z}}}+\frac{J^kz}{\norm{J^k\hat{z}}}-\frac{J^k\hat{z}}{\norm{J^k\hat{z}}}}, \\
 & \le \frac{|\norm{J^k\hat{z}}-\norm{J^kz}|+\norm{J^kz-J^k\hat{z}}}{\norm{J^k\hat{z}}},\\
 & \le \frac{2\norm{J^k(z-\hat{z})}}{\norm{J^k\hat{z}}}.
 \end{aligned}
  \end{equation}

 Under the given conditions for stability, $J_1$ is non-existent. As
 $J_4$ is Schur stable, there exists a positive definite matrix $P$
 such that $J_4^T P J_4 - P \le 0$. If $\lambda$ is defective, under
 the given conditions for stability, $J_3$ is also non-existent. This
 implies that, for any $k \in \natz$,
  \begin{align*}
    \norm{\frac{J^kz}{\norm{J^kz}}-\frac{J^k\hat{z}}{\norm{J^k\hat{z}}}}
    & \le \frac{2 \left(\norm{J_2^k(z_2-\hat{z}_2)}+\norm{J_4^kz_4}
      \right)}{\norm{J_2^k\hat{z}_2}} \leq c_1\delta,
  \end{align*}
  for some $c_1 > 0$ that is independent of $k$ and $\delta$. The last
  inequality follows from the fact that
  $z_2 = r \frac{\hat{z}_2}{\norm{\hat{z}_2}}$ for some
  $r \in (0, \delta)$ and
  $\frac{\norm{J_4^kz_4}}{\norm{J_2^k\hat{z}_2}}$ is upper bounded by
  some positive real number $\bar{c}\delta$ as
  $\norm{\hat{z}} = \norm{\hat{z}_2} = 1$, $\norm{J_4^kz_4}$ decreases
  exponentially while the smallest singular value of $J_2^k$ can
  decrease only at a polynomial rate, as shown in Theorem~2.1
  in~\cite{IG-etal:1996}. If $\lambda$ is non-defective, then $J_2$
  and $J_3$ are orthogonal matrices. Hence, in either case, we can say
  that there exists a $c > 1$ that is independent of $\delta$ and $k$
  such that
  \begin{equation*}
   \norm{\frac{J^kz}{\norm{J^kz}}-\frac{J^k\hat{z}}{\norm{J^k\hat{z}}}}
   \le c \delta, \quad \forall z \in  B_{\delta}(\hat{\mathscr{M}}), \
   \forall k \in \nat.
 \end{equation*}
 As $\frac{J^k\hat{z}}{\norm{J^k\hat{z}}} \in \hat{\mathscr{M}}$, we
 can also say that
 $\frac{J^kz}{\norm{J^kz}} \in B_{c \delta}(\hat{\mathscr{M}})$. Thus,
 given any $\epsilon > 0$, we can choose
 $\delta < \min\{\epsilon, \rho, 1\}/c >0$ such that
 $z \in B_{\delta}(\hat{\mathscr{M}})$ implies
 $\frac{J^kz}{\norm{J^kz}} \in B_{\epsilon}(\hat{\mathscr{M}}) \cap
 B_{\rho}(\hat{\mathscr{M}}) \subset R_i$ , $\forall k \in
 \nat$. Thus,
 $\gamma^k(z) = \frac{J^kz}{\norm{J^kz}} \in
 B_{\epsilon}(\hat{\mathscr{M}}), \ \forall k \in \nat$, which then
 means that $\hat{\mathscr{M}}$ is stable. Note that the choice
 $\delta < 1/c$, ensures that $B_{c \delta}(\hat{\mathscr{M}})$
 excludes the zero vector and hence
 $B_{c \delta}(\hat{\mathscr{M}}) \subset R_i$.
	
 Now, we prove the sufficiency for asymptotic stability. From the
 sufficiency for stability, we know that if
 $z \in B_{\delta}(\hat{\mathscr{M}})$ for small enough $\delta > 0$
 then $\gamma^k(z) = \frac{J^kz}{\norm{J^kz}} \in R_i$ and
 $\norm{ \gamma^k(z) } = 1, \ \forall k \in \nat$. Under the given
 conditions, note that $J_1$ and $J_3$ are non-existent and since
 $J_4$ is Schur stable, $J_4^k z_4$ converges to zero as
 $k \rightarrow \infty$. Thus, we can say that $J^kz$ asymptotically
 converges to $\mathrm{span}\{E_2\} = \mathrm{span}\{\mathscr{M}\}$. But
 $\mathscr{M} = \mathrm{span}\{\mathscr{M}\}$ in the subcase (a). In
 subcase (b), if $\mathscr{M} \neq \{0\}$ then $\mathscr{M}$ is either
 the eigenspace corresponding to $\lambda$, which is a line, or one of
 the two rays contained within it. In either case, $J_2^k z_2 = z_2$
 for all $k \in \natz$ as $q_2 = \{1\}$ and $E_2 z_2 \in
 \mathscr{M}$. This proves asymptotic stability of $\hat{\mathscr{M}}$
 under the gamma map.

 Now, we prove the necessity for asymptotic stability. If neither of
 the sub-cases (a) or (b) holds, then we can always choose a
 $z \in B_{\delta}(\hat{\mathscr{M}})$ such that
 $z \in \mathrm{span}\{ \mathscr{M} \}$ but $z \notin
 \mathscr{M}$. For such a $z$, the distance from $\gamma^k(z)$ to
 $\hat{\mathscr{M}}$ is the same for all $k \in \nat$ and equals the
 distance from $z$ to $\hat{\mathscr{M}}$. If $q_3$ is non-empty, we
 can again choose a $z$ with a non-zero $z_3$ and hence $J^k z$ does
 not converge to $\mathrm{span}\{ \mathscr{M}\}$. If $q_1$ is
 non-empty then $\hat{\mathscr{M}}$ is not even stable. Thus, in each
 of these sub-cases $\hat{\mathscr{M}}$ is not asymptotically stable.

 Finally, we prove the necessity for stability or equivalently, the
 sufficiency for instability. If $q_1$ is non-empty then by
 considering $\displaystyle \frac{ G(\tau_i) }{ \rho(G(\tau_i)) }$
 instead of $J$, we can again say that there exist initial
 $z \in B_{\delta}(\hat{\mathscr{M}})$ for which $J^k z$ converges to
 the span of columns of $E_1$, which means $\hat{\mathscr{M}}$ is not
 stable or asymptotically stable. If $\lambda$ is non-defective and
 there is a defective eigenvalue in $q_3$, then we can choose a
 $z \in B_{\delta}(\hat{\mathscr{M}})$ such that $\norm{J_3^k z_3}$
 grows arbitrarily large while $\norm{J_2^k z_2}$ remains constant. If
 $\lambda$ is defective and $q_3$ is non-empty then, again we can
 choose a $z \in B_{\delta}(\hat{\mathscr{M}})$ such that
 $\norm{J_2^k z_2}$ is arbitrarily small for \emph{some} $k \in \nat$
 while $\norm{J_3^k z_3}$ remains a constant for all $k \in
 \natz$. This proves that the conditions in the result are also
 necessary for stability or sufficient for instability of
 $\hat{\mathscr{M}}$ under the gamma map.
\end{proof}

Theorem~\ref{thm:asy_stable_ray} implies that stability of a
  PIS can be determined by analyzing the
  spectrum and the generalized eigenspaces of the $G$ matrix
  corresponding to the region. Note also that if $\lambda$ is
  defective then conditions for stability also imply asymptotic
  stability. We can also relax some of the assumptions or generalize
  some of the claims in Theorem~\ref{thm:asy_stable_ray} but we skip
  discussing them due to space constraints.

\vspace{-2ex}
\subsection{Convergence of IETs to a periodic sequence}

  In this subsection, we provide conditions for the convergence of
  IETs to a given periodic sequence.
  
   	 \begin{lem}\thmtitle{Necessary and sufficient condition for the
  			existence of a given periodic sequence of IETs}
  		\label{rem:convergence_to_periodic_sequence}
  		Let $\tau_P \ldef \tau_{j_1}\tau_{j_2}..\tau_{j_p}$ represent a
  		finite sequence of IETs and let $\tau_P^{\omega}$ be
  		the sequence obtained by repeating $\tau_P$ infinitely. The
  		periodic sequence $\tau_P$ of IETs along the
  		trajectories of the system~\eqref{eq:system} under the
  		RBSTR~\eqref{eq:STC} \emph{exists} if
  		and only if there exists a subregion
  		$\bar{R}_P \subseteq R_{\tau_P}$, which is positively 
  		invariant under the map
  		\begin{equation}\label{eq:gamma_tau_P}
  		\gamma_{\tau_P}(x) \ldef
  		\frac{G_{\tau_P}x}{\norm{G_{\tau_P}x}}, \ G_{\tau_P} \ldef
  		G(\tau_{j_{p}})G(\tau_{j_{p-1}})..G(\tau_{j_2})G(\tau_{j_1}),
  		\end{equation} 
  		\begin{equation}\label{eq:R_tau_P}
  		\begin{aligned}
  		R_{\tau_P} \ldef \{x \in \real^n: &\  x \in R_{j_1},
  		G(\tau_{j_1})x \in R_{j_2}, ..., \\ & 
  		G(\tau_{j_{p-1}})..G(\tau_{j_2})G(\tau_{j_1})x \in R_{j_p}\}.
  		\end{aligned}
  		\end{equation}
  	\end{lem}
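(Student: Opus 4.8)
The plan is to mirror the (one-line) argument behind Lemma~\ref{rem:convergence_of_IET}, which rests on two facts: by Assumption~\ref{A:R_i} the $\tau_i$'s are distinct, so a sequence of IETs is interchangeable with the sequence of regions visited by the state at event times; and the per-step renormalization in the gamma map~\eqref{eq:gamma_map} only rescales directions, so it telescopes cleanly when composed over a full period into the single renormalization in~\eqref{eq:gamma_tau_P}. Everything reduces to tracking which region the state lands in at each of the $p$ substeps, which is exactly what $R_{\tau_P}$ in~\eqref{eq:R_tau_P} records.

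For the ``only if'' direction, I would fix an initial condition $x(t_0)\ne 0$ whose IET sequence is $\tau_P^{\omega}$. Since the first $p$ IETs are $\tau_{j_1},\dots,\tau_{j_p}$, the states $x(t_0),x(t_1),\dots,x(t_{p-1})$ lie in $R_{j_1},\dots,R_{j_p}$; because each $R_j$ is a cone and each $G(\tau_j)$ is linear, this says precisely that $\tilde{x}(t_0)\in R_{\tau_P}$, and moreover that applying the gamma map $p$ times along this trajectory coincides with applying $G_{\tau_P}$ and renormalizing, i.e.\ $\tilde{x}(t_p)=\gamma_{\tau_P}(\tilde{x}(t_0))$. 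By periodicity the same computation applies from $t_{mp}$ for every $m$, so $\tilde{x}(t_{mp})=\gamma_{\tau_P}^{m}(\tilde{x}(t_0))\in R_{\tau_P}$ for all $m\in\natz$. Taking $\bar{R}_P$ to be the orbit $\{\gamma_{\tau_P}^{m}(\tilde{x}(t_0)):m\in\natz\}$ (or the cone it generates) yields a nonempty subregion of $R_{\tau_P}$ that is positively invariant under $\gamma_{\tau_P}$.

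For the converse I would take any $x_0\in\bar{R}_P\setminus\{0\}$ as the initial state. Membership $x_0\in\bar{R}_P\subseteq R_{\tau_P}$ forces $x_0\in R_{j_1}$, $G(\tau_{j_1})x_0\in R_{j_2}$, \dots, $G(\tau_{j_{p-1}})\cdots G(\tau_{j_1})x_0\in R_{j_p}$, so the first $p$ IETs are exactly $\tau_{j_1},\dots,\tau_{j_p}$; Assumption~\ref{A:R_i} (with $l=1$) ensures that none of the intermediate states vanishes, so the gamma map and $\gamma_{\tau_P}$ are well defined along the whole trajectory. As in the first direction $\tilde{x}(t_p)=\gamma_{\tau_P}(\tilde{x}(t_0))$, and positive invariance of $\bar{R}_P$ puts $\tilde{x}(t_p)$ back in $\bar{R}_P\subseteq R_{\tau_P}$; an induction over periods then shows the IET sequence is $\tau_P^{\omega}$, so the periodic pattern exists.

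The only real difficulty is bookkeeping rather than new mathematics: one must verify carefully that the successive positive rescalings in~\eqref{eq:gamma_map} combine into the single rescaling defining $\gamma_{\tau_P}$, which is where the conic structure of the $R_i$'s and the exact definition of $R_{\tau_P}$ are used, and that distinctness of the $\tau_i$'s lets one pass freely between ``the IET sequence equals $\tau_P^{\omega}$'' and ``the region sequence is $R_{j_1}\cdots R_{j_p}$ repeated periodically.'' These are the same ingredients already invoked for Lemma~\ref{rem:convergence_of_IET}, so no tools beyond those in the excerpt are needed.
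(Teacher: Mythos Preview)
Your proposal is correct and takes essentially the same approach as the paper, which simply declares the result ``a direct extension of Lemma~\ref{rem:convergence_of_IET}''; you have merely made explicit the bookkeeping (orbit-as-invariant-set construction, telescoping of rescalings, and use of Assumption~\ref{A:R_i}) that the paper leaves to the reader.
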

  \begin{proof}
  	This result is a direct extension of 
  	Lemma~\ref{rem:convergence_of_IET}.
  \end{proof}

   \begin{cor}\thmtitle{Necessary condition for the convergence of
       IETs to a given periodic
       sequence}\label{cor:convergence_to_periodic_sequence}
     Consider system~\eqref{eq:system} under the RBSTR~\eqref{eq:STC}. Let $\tau_P$ and $\tau_P^{\omega}$ be
     defined as in
     Lemma~\ref{rem:convergence_to_periodic_sequence}. Suppose there
     exists a subregion $\bar{R}_P \subseteq R_{\tau_P}$ which is
     positively invariant under the $\gamma_{\tau_P}(.)$ map
     in~\eqref{eq:gamma_tau_P}. Then,
     \begin{itemize}
     \item there exists $\mu \in \realp$ such that
       $S_{\mu}(G_{\tau_P}) \cap \cl{\bar{R}_P} \ne \{0\}$ where
       $G_{\tau_P}$ is defined in~\eqref{eq:gamma_tau_P} and
       $S_{\mu}(.)$ is defined in~\eqref{eq:gamma_invariant_subspace}.
     	\item for almost all $x \in \bar{R}_P$, the sequence
     	$\{\gamma^k_{\tau_P}(x)\}_{k \in \natz}$ converges to the set
     	$S_{\mu_{\max}}(G_{\tau_P}) \cap \cl{\bar{R}_P}$, where
     	$\mu_{\max} \ldef \max\{\mu \in \realp :
     	S_{\mu}(G_{\tau_P}) \cap \cl{\bar{R}_P} \ne \{0\}\}$.
     \end{itemize}
     
   \end{cor}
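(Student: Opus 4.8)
The plan is to recognize that this corollary is the \emph{periodic} counterpart of Proposition~\ref{prop:existence_of_PIR}, obtained by replacing the one-step data $(G(\tau_i), R_i, \gamma)$ with the composite data $(G_{\tau_P}, R_{\tau_P}, \gamma_{\tau_P})$ from~\eqref{eq:gamma_tau_P}--\eqref{eq:R_tau_P}. So the first step is to observe that if $\bar{R}_P \subseteq R_{\tau_P}$ is positively invariant under $\gamma_{\tau_P}$, then for every $x \in \bar{R}_P$ the iterates are the normalized powers of the linear map, $\gamma_{\tau_P}^k(x) = G_{\tau_P}^k x / \norm{G_{\tau_P}^k x}$ for all $k \in \nat$. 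Here one uses Assumption~\ref{A:R_i} together with the definition of $R_{\tau_P}$: since the orbit remains in $\bar{R}_P \subseteq R_{\tau_P} \subseteq R_{j_1}$, peeling off one factor $G(\tau_{j_\ell})$ at a time and using that the null space of $G^l(\tau_{j_\ell})$ meets $R_{j_\ell}$ only at $0$, the denominator $\norm{G_{\tau_P}^k x}$ never vanishes along such an orbit. From this point on the $\gamma_{\tau_P}$-dynamics on $\bar{R}_P$ is exactly the normalized linear dynamics of $G_{\tau_P}$.

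Second, I would invoke the generalized form of Lemma~\ref{lem:convergence_to_invsubspace} used in the proof of Proposition~\ref{prop:existence_of_PIR}: for any nonzero $x \in \real^n$, the sequence $\{ G_{\tau_P}^k x / \norm{G_{\tau_P}^k x}\}_{k \in \natz}$ converges to $S_{\mu}(G_{\tau_P}) \cap B_1(0)$ for some $\mu \in \realp$ determined by the largest modulus among the eigenvalues of $G_{\tau_P}$ that are represented in $x$, where $S_\mu(\cdot)$ is as in~\eqref{eq:gamma_invariant_subspace}. Applying this to an arbitrary $x \in \bar{R}_P$, and using that $\bar{R}_P$, hence $\cl{\bar{R}_P}$, is positively invariant under $\gamma_{\tau_P}$ and that $\cl{\bar{R}_P}$ is closed, the limit lies in $S_\mu(G_{\tau_P}) \cap \cl{\bar{R}_P}$ and is a unit vector, hence nonzero. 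This gives the first bullet and shows that $\mu_{\max}$ in the second bullet is a maximum over a nonempty, finite set of candidate moduli, so it is well defined.

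Third, for the second bullet, I would pass to the real Jordan form of $G_{\tau_P}$ and argue exactly as in Lemma~\ref{lem:convergence_to_invsubspace} and in the last part of the proof of Proposition~\ref{prop:existence_of_PIR}: decompose $x \in \bar{R}_P$ along the $\real$-generalized eigenspaces of $G_{\tau_P}$; the set of $x \in \bar{R}_P$ whose component along the eigenspaces associated with eigenvalues of modulus $\mu_{\max}$ vanishes is a relatively lower-dimensional, hence measure-zero, subset of $\bar{R}_P$; for every remaining $x$ that top-modulus component dominates, so by the same binomial/polynomial-versus-geometric estimate as in Lemma~\ref{lem:convergence_to_invsubspace} the sequence $\{\gamma_{\tau_P}^k(x)\}_{k \in \natz}$ converges to $S_{\mu_{\max}}(G_{\tau_P}) \cap \cl{\bar{R}_P}$.

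The main obstacle I anticipate is purely bookkeeping: making precise that ``almost all $x \in \bar{R}_P$'' is the correct genericity statement when $\bar{R}_P$ may be a curved or lower-dimensional cone, and in particular that the exceptional set --- those $x$ lacking a component along the top-modulus eigenspaces, together with the degenerate configurations where two eigenvalues of equal maximal modulus interact --- is negligible relative to $\bar{R}_P$. This is the same subtlety already present in the proof of Proposition~\ref{prop:existence_of_PIR}, so I would treat it identically; modulo this point, the corollary is a direct transcription of that proposition via Lemma~\ref{rem:convergence_to_periodic_sequence}.
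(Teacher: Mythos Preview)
Your proposal is correct and matches the paper's approach exactly: the paper's proof is a single sentence stating that the result is a direct extension of Proposition~\ref{prop:existence_of_PIR}, and your write-up spells out precisely that extension by substituting $(G_{\tau_P}, R_{\tau_P}, \gamma_{\tau_P})$ for $(G(\tau_i), R_i, \gamma)$ and re-running the same Jordan-form argument. The ``almost all'' subtlety you flag is indeed inherited verbatim from Proposition~\ref{prop:existence_of_PIR} and is treated there the same way, so no additional work is needed.
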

   \begin{proof}
     This result is a direct extension of
     Proposition~\ref{prop:existence_of_PIR}.
   \end{proof}

   Similarly, all the other results in this section related to
   existence, stability and asymptotic stability of positively
   invariant subregions hold exactly by replacing $R_i$, $G(\tau_i)$
   and $\gamma(.)$ with $ R_{\tau_P}$, $G_{\tau_P}$ and
   $\gamma_{\tau_P}(.)$, respectively. In other words, we have
   necessary and sufficient conditions for local convergence of
   IETs to a given periodic sequence.

   Note that, references~\cite{GAG-MM:2021,GG-MM:2021,GG-MM:2022} also
   do similar analysis of the periodic patterns exhibited by
   IETs. But, their analysis is based on the assumption
   that the matrix which transfers the state from one sampling time to
   the next is mixed and of irrational rotations. On the other hand,
   our results hold for general $G$ matrices. Compared
   to~\cite{GAG-MM:2021,GG-MM:2021}, we also have necessary and
   sufficient conditions for stability and asymptotic stability of the
   PISs.

\vspace{-2ex}
\section{Numerical examples} \label{sec:numerical-examples}

In this section, we illustrate our results through two numerical
examples and simulations.

\subsubsection*{Example 1}
Let us consider a 3-dimensional system,
\begin{equation*}
\dot{x}=\begin{bmatrix}
0 & 1 & 0\\ 0 & 0 & 1\\ -6 & 7 & 0
\end{bmatrix}x+\begin{bmatrix} 0 \\ 0 \\ 1
\end{bmatrix}u \rdef Ax + Bu .
\end{equation*}
$A$ has real eigenvalues at $\{1,2,-3\}$. The control gain
$K=[0 \quad -18 \quad -6]$ ensures that $A_c$ has real eigenvalues at
$\{-1,-2,-3\}$. 
{\setlength\intextsep{0pt}
	\begin{figure}[h]
		\centering
		\begin{subfigure}{4cm}
			\includegraphics[width=3.8cm]{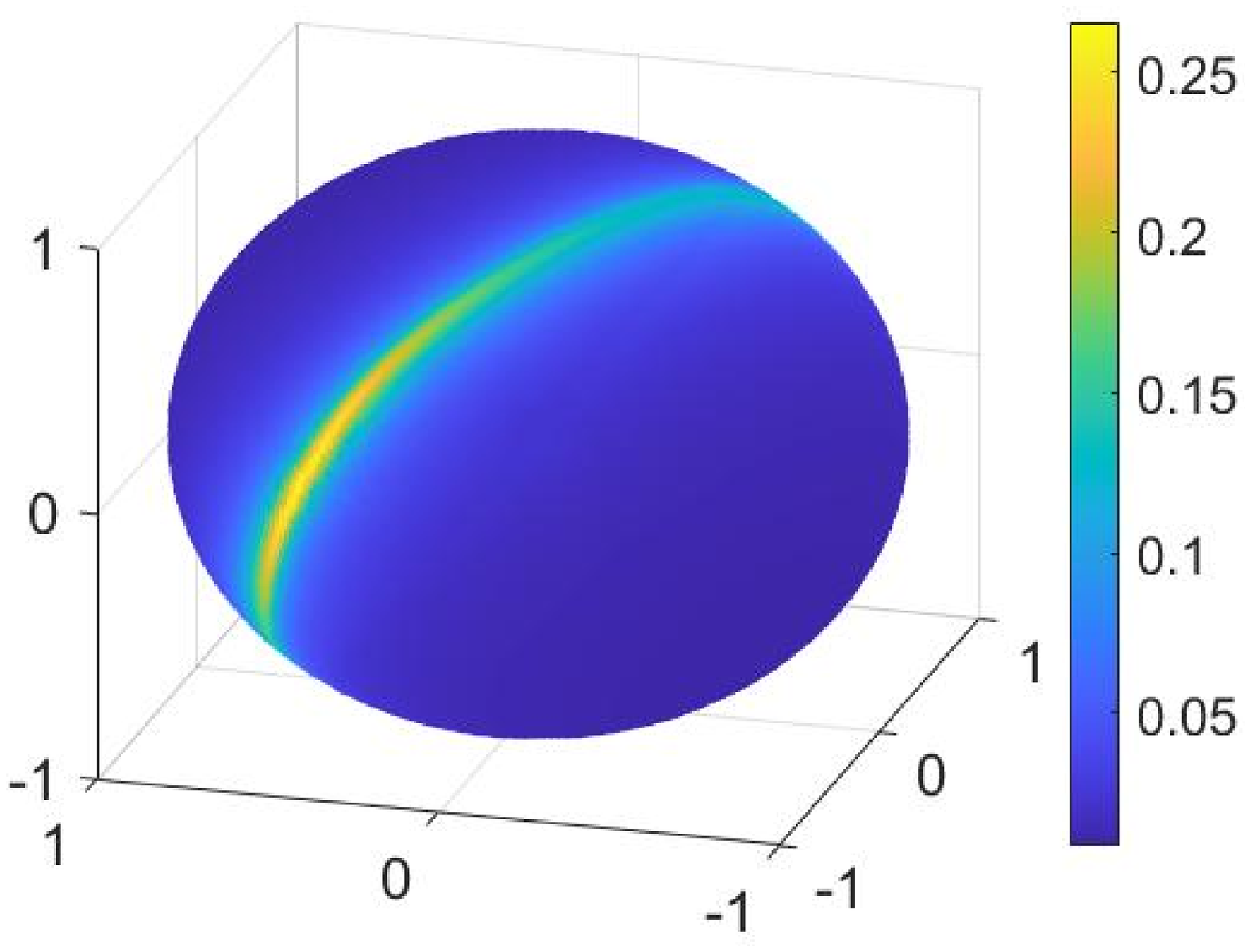}
			\caption{Inter-event time function}
			\label{fig:case7_tau_e}
		\end{subfigure}
		\quad
		\begin{subfigure}{4cm}
			\includegraphics[width=3.8cm]{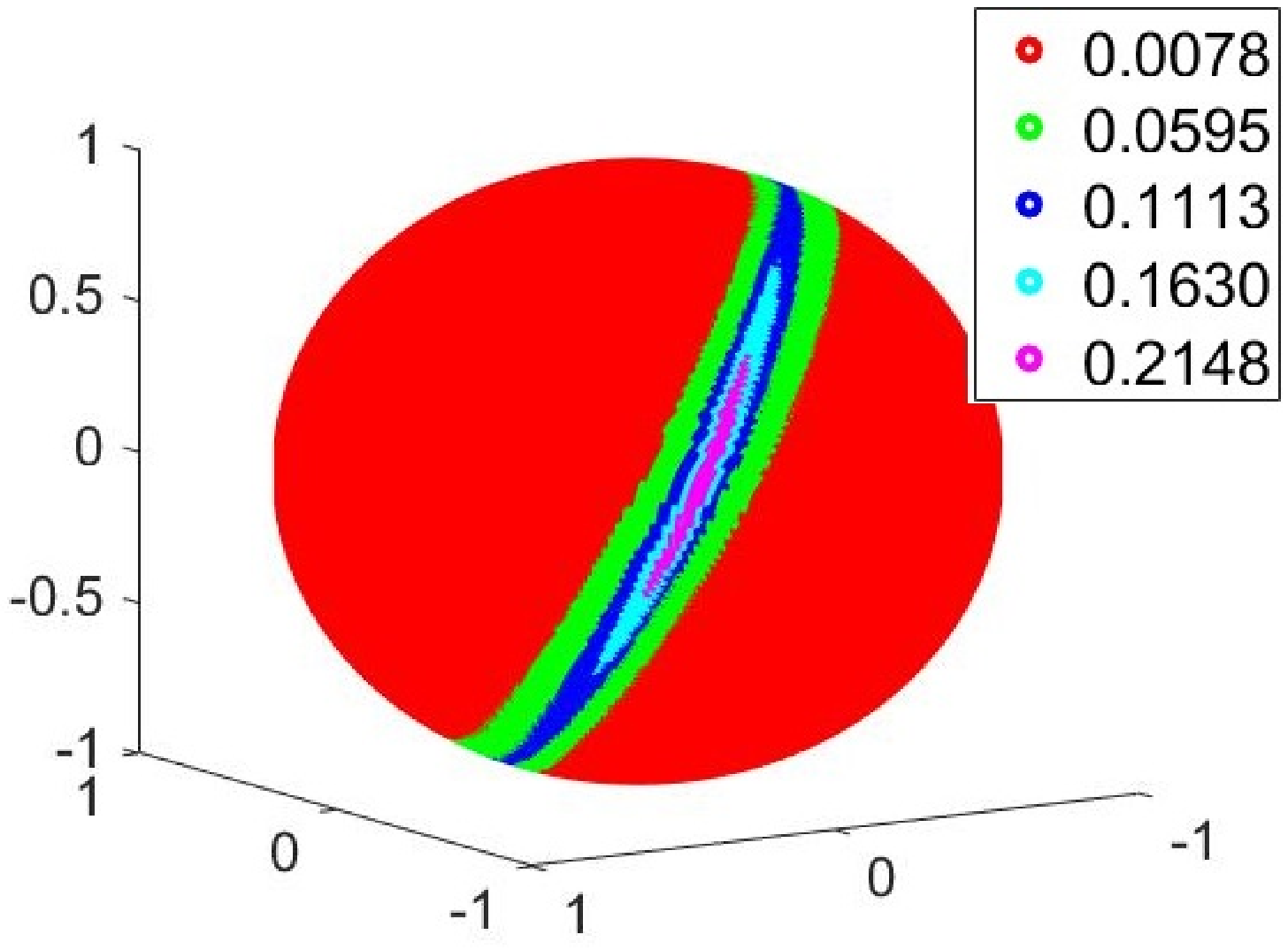}
			\caption{State-space partitions}
			\label{fig:case7_ssp}
		\end{subfigure}
		\quad
		\begin{subfigure}{4cm}
			\includegraphics[width=3.8cm]{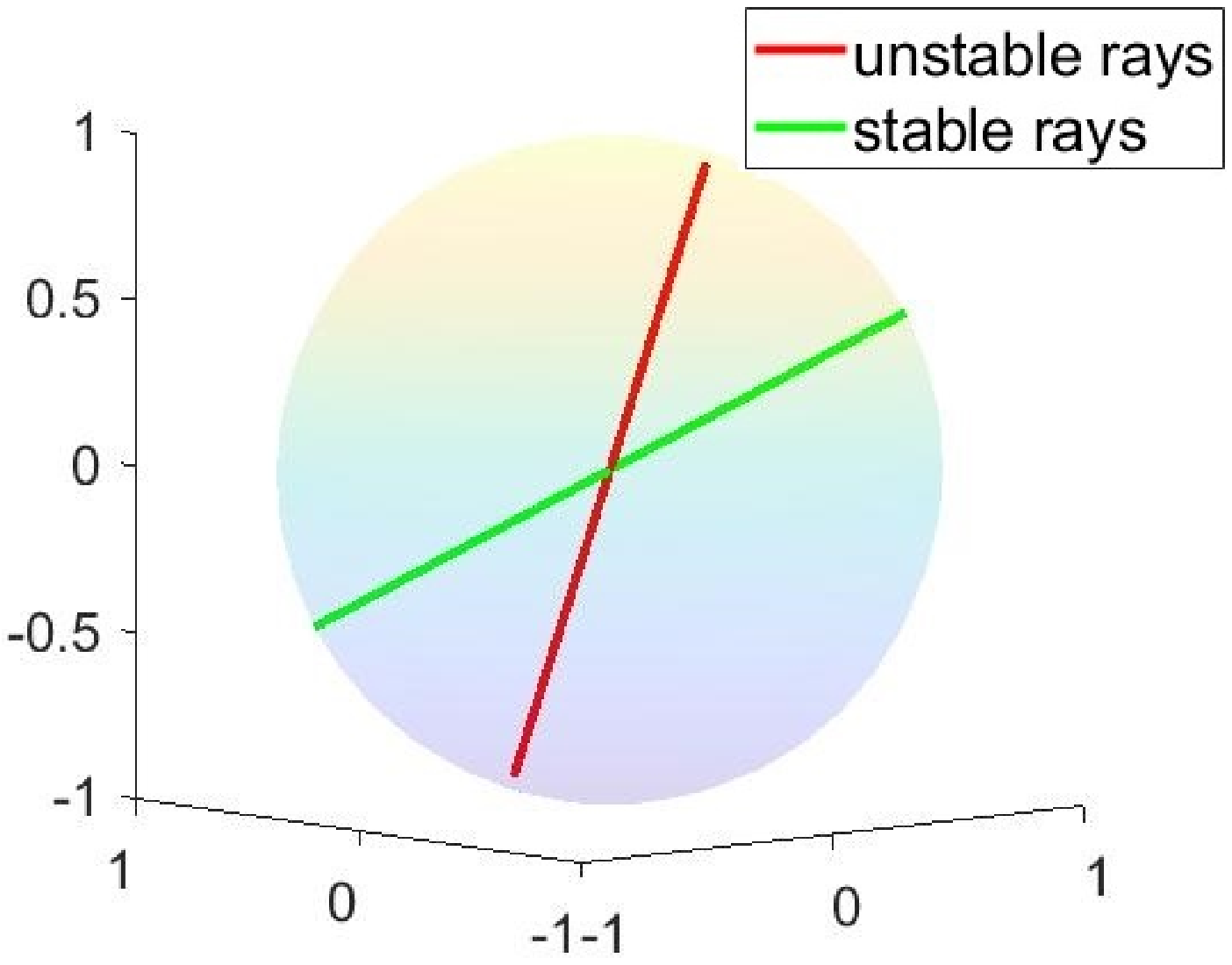}
			\caption{Positively invariant rays}
			\label{fig:case7_PIRs}
		\end{subfigure}
		\quad
		\begin{subfigure}{4cm}
			\includegraphics[width=3.8cm]{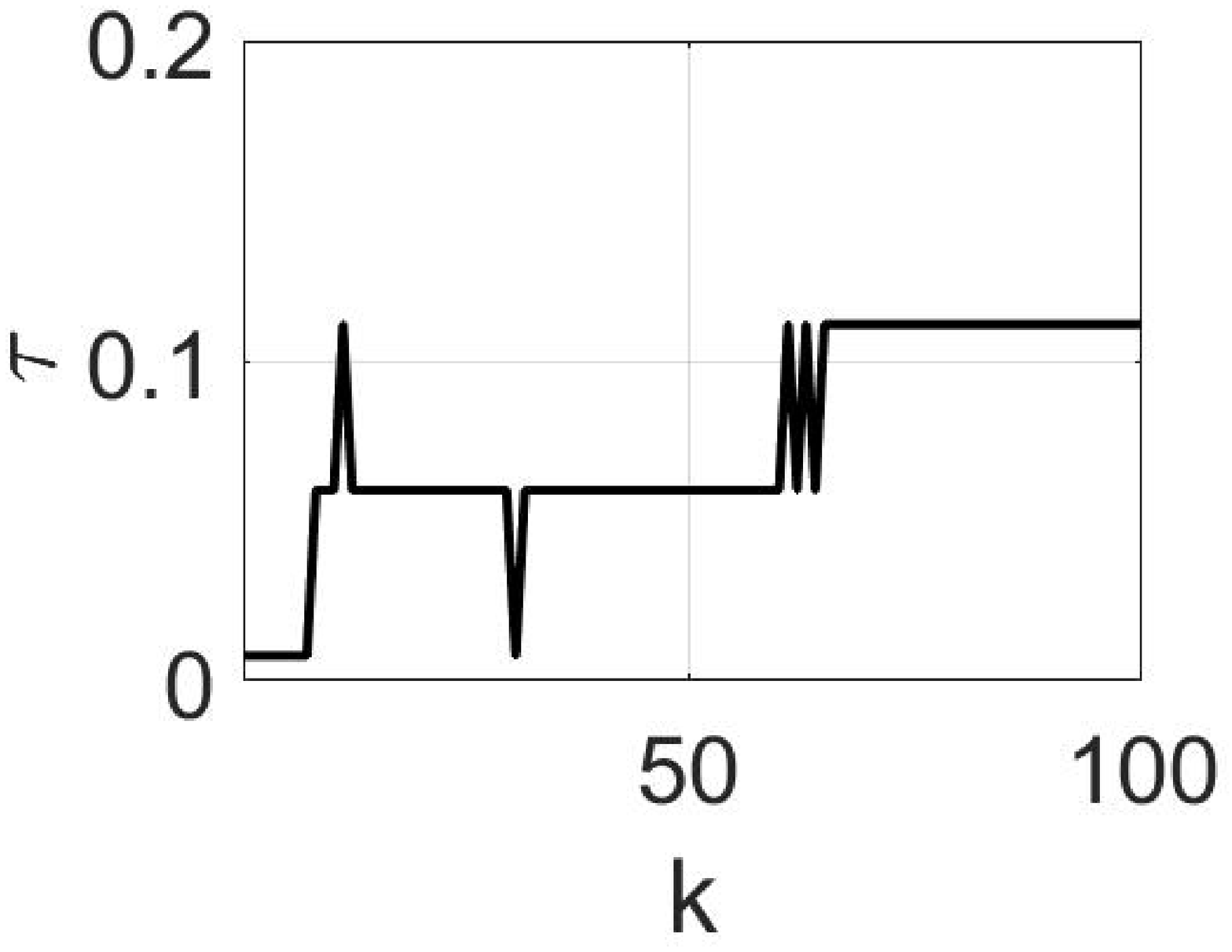}
			\caption{Inter-event times evolution}
			\label{fig:case7_tau_k}
		\end{subfigure}
		\caption{Simulation results of Example 1.}
		\label{fig:3-dim_system}
	\end{figure}
}

Figure~\ref{fig:3-dim_system} presents the simulation
results of Example 1. Figure~\ref{fig:case7_tau_e} shows the
value of the IET function $\tau_e(.)$ on the unit sphere
under the relative thresholding event-triggering rule. The global 
minimum and global maximum of $\tau_e(.)$, respectively, are
$\tmin=0.0088$ and $\tmax=0.2655$. We choose the number of regions
$r=5$ and discretize the interval $[\tau_{\min},\tau_{\max}]$ into $5$
equal parts. Then, we partition the state space into $5$ regions and
associate each region with the corresponding IET. Figure~\ref{fig:case7_ssp} shows the state-space partitions and
the corresponding IET. In this case, the $G$
  matrices corresponding to all the regions have real distinct
  positive eigenvalues. So, we can eliminate the possibility of
  existence of a PIS that does not contain
  a PIR. We can find numerically the four
PIRs, which are shown in
Figure~\ref{fig:case7_PIRs}. We can also analyze the stability of
these PIRs by analyzing the spectrum of the
corresponding $G$ matrices. Note that, here two of the PIRs are asymptotically stable and the other two are
unstable. The two asymptotically stable PIRs
  form a positively invariant line passing through the origin and the
  point $(0.6813,-0.5616,0.4695)$. Similarly, the two unstable
  PIRs form a positively invariant line passing
  through the origin and the point $(0.1048,-0.3145,0.9435)$. Next,
we choose an arbitrary vector close to one of the asymptotically
stable PIRs as the initial state and
Figure~\ref{fig:case7_tau_k} presents the evolution of inter-event
times under the RBSTR~\eqref{eq:STC}. We
can see that the IETs converge to a steady state value
which is exactly equal to the IET corresponding to one of
the asymptotically stable PIRs.

\subsubsection*{Example 2}
Next, let us consider a $5^{\text{th}}$ order system,
	\begin{equation*}
	\dot{x}=\begin{bmatrix}
	0 & 1 & 0 & 0 & 0\\ 0 & 0 & 1 & 0 & 0\\	0 & 0 & 0 & 1 & 0\\ 0 & 0 & 0 & 0 & 1\\	30 & -79 & 80 & -40 & 10
	\end{bmatrix}x + \begin{bmatrix}
	0\\0\\0\\0\\1
	\end{bmatrix}u \rdef Ax+Bu.
	\end{equation*}
	The system matrix $A$ has eigenvalues at $\{1, 2, 3, 2\pm i\}$. We choose the
	control gain $K$ so that $A_c \ldef A+BK$ has eigenvalues at $\{-0.1,-0.15,-0.2,-0.25,-0.3\}$. We first partition the state 
	space into $50$ solid convex cones by using the recursive zonal equal area 
	sphere partitioning toolbox~\cite{PL:2005,PL:2006} in MATLAB which 
	helps to partition higher dimensional unit spheres into regions of 
	equal Lebesgue measure. Note that none of our results require the 
	region-based self-triggered controller guarantee asymptotic stability 
	of the equilibrium at the origin of the closed loop system. So, we 
	arbitrarily assign an IET from the interval 
	$[0.03,0.23]$ for each convex cone. As per Assumption~\ref{A:R_i}, if 
	there are two convex cones with the same IET then they 
	belong to the same region.
	
	{\setlength\intextsep{0pt}
		\begin{figure}[h]
			\centering
			\includegraphics[width=4cm]{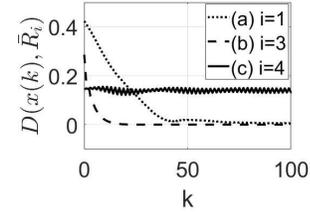}
			\caption{Simulation results of Example 2. (a) Asymptotically stable PIR ($\bar{R}_1$). (b) Asymptotically stable positively invariant radial 
				line ($\bar{R}_3$). (c) Stable PIS which is a union 
				of two rays ($\bar{R}_4$).}
			\label{fig:5-dim_system}
		\end{figure}
	}
	
	In this case, we identify the existence of four PISs $\bar{R}_1, \bar{R}_2, \bar{R}_3$ and $\bar{R}_4$. $\bar{R}_1$ and $\bar{R}_2$ are PIRs passing through the points  $\begin{bmatrix}
	-0.8524 & -0.4460 & -0.2333 & -0.1220 & -0.0724
	\end{bmatrix}^T$ and $\begin{bmatrix}
	-1 & -0.0003 & 0 & 0 & 0
	\end{bmatrix}^T$, respectively. $\bar{R}_1$ is a subset of a region 
	with corresponding IET $0.2060$. Whereas $\bar{R}_2$ is 
	a subset of a region with corresponding IET $0.0660$. 
	$\bar{R}_3$ is a positively invariant radial line passing through the 
	point $\begin{bmatrix}
	0 & 0 & -0.0045 & -0.0190 & 0.9998
	\end{bmatrix}^T$ in a region with corresponding IET 
	$0.2260$. $\bar{R}_3$ is the eigensubspace corresponding to a real 
	negative eigenvalue of the respective $G$ matrix. $\bar{R}_4$ is a 
	union of two rays passing through the points $\begin{bmatrix}
 0.6241 & 0.3266 & 0.1737 & 0.1137 & -0.6788
	\end{bmatrix}^T$ and $\begin{bmatrix}
0.5831 & 0.3054 & 0.1574 & 0.0610 & 0.7336
	\end{bmatrix}^T$, respectively, in a region with corresponding IET $0.207385$.  Note that, these two rays belong to the span of two eigenvectors of the respective $G$ matrix corresponding to two real distinct eigenvalues with magnitude equal to $\rho(G)$.

According to Theorem~\ref{thm:asy_stable_ray}, we can analyze the 
stability of the intersection of these PISs 
with the unit sphere by studying the spectrum of the corresponding $G$ 
matrix. In this way, we see that both $\bar{R}_1 \cap B_1(0)$ and 
$\bar{R}_3\cap B_1(0)$ are asymptotically stable. Whereas 
$\bar{R}_2\cap B_1(0)$ is unstable. Figure~\ref{fig:5-dim_system}(a) and 
Figure~\ref{fig:5-dim_system}(b), respectively, present the convergence of the 
system state to the asymptotically stable positively subregions 
$\bar{R}_1 \cap B_1(0)$ and $\bar{R}_3 \cap B_1(0)$ for an initial 
condition sufficiently close to the respective subregion. In 
Figure~\ref{fig:5-dim_system}, \vspace{-1ex}
\begin{equation*}
  D(x(k),\bar{R}) \ldef \text{dist} \left( \frac{ x(k) }{ \norm{x(k)} 
  }, \ \bar{R} \cap B_1(0) \right) \vspace{-1ex}. 
\end{equation*}

Note that $\bar{R}_4$ does not belong to the class of PISs considered in Theorem~\ref{thm:asy_stable_ray}. 
But we can analyze the stability of $\bar{R}_4\cap B_1(0)$ in a similar 
way. In this case, the $G$ matrix corresponding to $\bar{R}_4$ is 
diagonalizable with $\sigma(G)=\{\lambda_i\}$, $i \in \{1,2,..,5\}$. 
We find that, $\lambda_1=-\lambda_2=\rho(G)$ and $|\lambda_i|<\rho(G)$ 
for all $i \in \{3,4,5\}$. Let $v_i$ denote an eigenvector of $G$ 
corresponding to an eigenvalue $\lambda_i$ for all $i \in 
\{1,2,..,5\}$. Note that $\bar{R}_4=\{\alpha u_1 \cup \alpha u_2: 
\alpha \ge 0\}$ where $u_1=\alpha_1v_1+\alpha_2v_2$ and 
$u_2=\alpha_1v_1-\alpha_2v_2$ for some  $\alpha_1, \alpha_2 \in \real$. 
As in the proof of Theorem~\ref{thm:asy_stable_ray}, without loss of 
generality, let us suppose that $G$ and hence $J\ldef 
\frac{G}{\lambda_1}$ are in real Jordan form. Now, by using similar 
arguments as in the proof of Theorem~\ref{thm:asy_stable_ray}, 
especially inequality~\eqref{eq:stability_inequality} and the following 
discussion, we can show that $\bar{R}_4\cap B_1(0)$ is stable.
Figure~\ref{fig:5-dim_system}(c) presents the distance between $\bar{R}_4\cap 
B_1(0)$ and a $\gamma$ map for an initial condition arbitrarily close 
to $\bar{R}_4$.

\vspace{-3ex}
\section{Conclusion} \label{sec:conclusion} In this paper, we analyzed
the evolution of IETs along the trajectories of linear
systems under RBSTRs. Under this control
method, studying steady state behavior of the IETs is
equivalent to studying the existence of a conic subregion, which is a
positively invariant set under the map that gives the evolution of the
state from one event to the next. We provided necessary conditions and sufficient
conditions for the existence of a PIS. We
also provided necessary and sufficient conditions for a PIS to be stable and asymptotically stable.  We
  extended this analysis to convergence of IETs to a
  given periodic sequence. We verified the proposed results through
numerical simulations. Future work includes analysis of the
  average and periodicity of the IETs generated by a
  general class of triggering rules as well as the design of
  triggering policies under scheduling constraints.

\vspace{-2ex}
\section*{Acknowledgements}

We thank the anonymous reviewers for their suggestions, which
  helped improve the paper significantly.
\vspace{-2ex}
\bibliographystyle{IEEEtran}
\bibliography{references}

\begin{thebibliography}{10}
\providecommand{\url}[1]{#1}
\csname url@samestyle\endcsname
\providecommand{\newblock}{\relax}
\providecommand{\bibinfo}[2]{#2}
\providecommand{\BIBentrySTDinterwordspacing}{\spaceskip=0pt\relax}
\providecommand{\BIBentryALTinterwordstretchfactor}{4}
\providecommand{\BIBentryALTinterwordspacing}{\spaceskip=\fontdimen2\font plus
\BIBentryALTinterwordstretchfactor\fontdimen3\font minus
  \fontdimen4\font\relax}
\providecommand{\BIBforeignlanguage}[2]{{%
\expandafter\ifx\csname l@#1\endcsname\relax
\typeout{** WARNING: IEEEtran.bst: No hyphenation pattern has been}%
\typeout{** loaded for the language `#1'. Using the pattern for}%
\typeout{** the default language instead.}%
\else
\language=\csname l@#1\endcsname
\fi
#2}}
\providecommand{\BIBdecl}{\relax}
\BIBdecl

\bibitem{PT:2007}
P.~Tabuada, ``Event-triggered real-time scheduling of stabilizing control
  tasks,'' \emph{IEEE Transactions on Automatic Control}, vol.~52, no.~9, pp.
  1680--1685, 2007.

\bibitem{WH:2012}
W.~Heemels, K.~Johansson, and P.~Tabuada, ``An introduction to event-triggered
  and self-triggered control,'' in \emph{2012 IEEE 51st IEEE Conference on
  Decision and Control (CDC)}, 2012, pp. 3270--3285.

\bibitem{ML:2010}
M.~Lemmon, ``Event-triggered feedback in control, estimation, and
  optimization,'' in \emph{Networked control systems}.\hskip 1em plus 0.5em
  minus 0.4em\relax Springer, 2010, pp. 293--358.

\bibitem{DT-SH:2017-book}
D.~Toli{\'c} and S.~Hirche, \emph{Networked control systems with intermittent
  feedback}.\hskip 1em plus 0.5em minus 0.4em\relax CRC Press, 2017.

\bibitem{AA:2010}
A.~Anta and P.~Tabuada, ``To sample or not to sample: Self-triggered control
  for nonlinear systems,'' \emph{IEEE Transactions on Automatic Control},
  vol.~55, no.~9, pp. 2030--2042, 2010.

\bibitem{WH:2013}
W.~P. M.~H. {Heemels}, M.~C.~F. {Donkers}, and A.~R. {Teel}, ``Periodic
  event-triggered control for linear systems,'' \emph{IEEE Transactions on
  Automatic Control}, vol.~58, no.~4, pp. 847--861, 2013.

\bibitem{KA-BB:2002}
K.~Astrom and B.~Bernhardsson, ``Comparison of {Riemann} and {Lebesgue}
  sampling for first order stochastic systems,'' in \emph{Proceedings of the
  41st IEEE Conference on Decision and Control}, vol.~2, 2002, pp. 2011--2016.

\bibitem{BD-AL-DQ:2017}
B.~Demirel, A.~S. Leong, and D.~E. Quevedo, ``Performance analysis of
  event-triggered control systems with a probabilistic triggering mechanism:
  The scalar case,'' \emph{20th IFAC World Congress}, vol.~50, no.~1, pp.
  10\,084--10\,089, 2017.

\bibitem{FB:2017}
F.~D. {Brunner}, W.~P. M.~H. {Heemels}, and F.~Allgower, ``Robust
  event-triggered {MPC} with guaranteed asymptotic bound and average sampling
  rate,'' \emph{IEEE Transactions on Automatic Control}, vol.~62, no.~11, pp.
  5694--5709, 2017.

\bibitem{PT-MF-JC:2018-tac}
P.~Tallapragada, M.~Franceschetti, and J.~Cort{\'e}s, ``Event-triggered
  second-moment stabilization of linear systems under packet drops,''
  \emph{IEEE Transactions on Automatic Control}, vol.~63, no.~8, pp.
  2374--2388, 2018.

\bibitem{SB-PT:2021-cta}
S.~Bose and P.~Tallapragada, ``Event-triggered second moment stabilisation
  under action-dependent {M}arkov packet drops,'' \emph{IET Control Theory and
  Applications}, vol.~15, no.~7, pp. 949--964, 2021.

\bibitem{PT:2016}
P.~{Tallapragada} and J.~Cort{\'e}s, ``Event-triggered stabilization of linear
  systems under bounded bit rates,'' \emph{IEEE Transactions on Automatic
  Control}, vol.~61, no.~6, pp. 1575--1589, 2016.

\bibitem{QL:2017}
Q.~{Ling}, ``Bit rate conditions to stabilize a continuous-time scalar linear
  system based on event triggering,'' \emph{IEEE Transactions on Automatic
  Control}, vol.~62, no.~8, pp. 4093--4100, 2017.

\bibitem{JP-JPH-DL:2017}
J.~{Pearson}, J.~P. {Hespanha}, and D.~{Liberzon}, ``Control with minimal
  cost-per-symbol encoding and quasi-optimality of event-based encoders,''
  \emph{IEEE Transactions on Automatic Control}, vol.~62, no.~5, pp.
  2286--2301, 2017.

\bibitem{MJK-PT-JC-MF:2020-tac}
M.~J. Khojasteh, P.~Tallapragada, J.~Cort{\'e}s, and M.~Franceschetti, ``The
  value of timing information in event-triggered control,'' \emph{IEEE
  Transactions on Automatic Control}, vol.~65, no.~3, pp. 925--940, 2020.

\bibitem{BAK-DJA-WPMH:2018}
B.~{Asadi Khashooei}, D.~J. {Antunes}, and W.~P. M.~H. {Heemels}, ``A
  consistent threshold-based policy for event-triggered control,'' \emph{IEEE
  Control Systems Letters}, vol.~2, no.~3, pp. 447--452, 2018.

\bibitem{FDB-DA-FA:2018}
F.~D. Brunner, D.~Antunes, and F.~Allgower, ``Stochastic thresholds in
  event-triggered control: A consistent policy for quadratic control,''
  \emph{Automatica}, vol.~89, pp. 376 -- 381, 2018.

\bibitem{MV-PM-EB:2009}
M.~Velasco, P.~Martí, and E.~Bini, ``Equilibrium sampling interval sequences
  for event-driven controllers,'' in \emph{European Control Conference (ECC)},
  2009, pp. 3773--3778.

\bibitem{RP-RS-WH-2022}
R.~Postoyan, R.~G. Sanfelice, and W.~Heemels, ``Explaining the ``mystery" of
  periodicity in inter-transmission times in two-dimensional event-triggered
  controlled system,'' \emph{IEEE Transactions on Automatic Control}, 2022.

\bibitem{AK-MM:2018}
A.~Sharifi~Kolarijani and M.~Mazo, ``Formal traffic characterization of lti
  event-triggered control systems,'' \emph{IEEE Transactions on Control of
  Network Systems}, vol.~5, no.~1, pp. 274--283, 2018.

\bibitem{GD-MM:2020}
G.~Delimpaltadakis and M.~Mazo, ``Traffic abstractions of nonlinear homogeneous
  event-triggered control systems,'' in \emph{59th IEEE Conference on Decision
  and Control (CDC)}, 2020, pp. 4991--4998.

\bibitem{GD-LL-MM:2021}
G.~Delimpaltadakis, L.~Laurenti, and M.~Mazo, ``Abstracting the sampling
  behaviour of stochastic linear periodic event-triggered control systems,'' in
  \emph{2021 60th IEEE Conference on Decision and Control (CDC)}, 2021, pp.
  1287--1294.

\bibitem{GAG-MM:2021}
G.~de~A.~Gleizer and M.~Mazo, ``Computing the sampling performance of
  event-triggered control,'' in \emph{Proceedings of the 24th International
  Conference on Hybrid Systems: Computation and Control}, 2021.

\bibitem{GG-MM:2021}
G.~{de Albuquerque Gleizer} and M.~Mazo, ``Computing the average inter-sample
  time of event-triggered control using quantitative automata,''
  \emph{Nonlinear Analysis: Hybrid Systems}, vol.~47, p. 101290, 2023.

\bibitem{GG-MM:2022}
G.~d.~A. Gleizer and M.~Mazo, ``Chaos and order in event-triggered control,''
  \emph{IEEE Transactions on Automatic Control}, pp. 1--16, 2023.

\bibitem{GD-MM1:2021}
G.~Delimpaltadakis and M.~Mazo, ``Isochronous partitions for region-based
  self-triggered control,'' \emph{IEEE Transactions on Automatic Control},
  vol.~66, no.~3, pp. 1160--1173, 2021.

\bibitem{GA-KM-MM:2021}
G.~d.~A. Gleizer, K.~Madnani, and M.~Mazo, ``Self-triggered control for
  near-maximal average inter-sample time,'' in \emph{2021 60th IEEE Conference
  on Decision and Control (CDC)}, 2021, pp. 1308--1313.

\bibitem{AR-PT:2020}
A.~Rajan and P.~Tallapragada, ``Analysis of inter-event times for planar linear
  systems under a general class of event triggering rules,'' in \emph{59th IEEE
  Conference on Decision and Control (CDC)}, 2020, pp. 5206--5211.

\bibitem{CF-etal:2012}
C.~Fiter, L.~Hetel, W.~Perruquetti, and J.-P. Richard, ``A state dependent
  sampling for linear state feedback,'' \emph{Automatica}, vol.~48, no.~8, pp.
  1860--1867, 2012.

\bibitem{IG-etal:1996}
I.~Gohberg, M.~Kaashoek, and J.~Kos, ``The asymptotic behavior of the singular
  values of matrix powers and applications,'' \emph{Linear Algebra and its
  Applications}, vol. 245, pp. 55--76, 1996.

\bibitem{PL:2005}
P.~Leopardi, ``Recursive zonal equal area sphere partitioning toolbox,''
  \emph{Matlab software package available via SourceForge: http://eqsp.
  sourceforge. net}, 2005.

\bibitem{PL:2006}
------, ``A partition of the unit sphere into regions of equal area and small
  diameter,'' \emph{Electronic Transactions on Numerical Analysis}, vol.~25, 01
  2006.

\end{thebibliography}
\end{document}